\documentclass{article}
\usepackage[utf8]{inputenc}

\usepackage{graphicx}
%
%
\usepackage{comment}
\usepackage{enumitem}
\usepackage{tikz}
\usepackage{array}
\usepackage{pgfplots}
\usetikzlibrary{pgfplots.fillbetween}
\usepackage{mathtools}

\usepackage{graphicx}

\usepackage{amsmath}
\usepackage{amsfonts}
\usepackage{amssymb}
\usepackage{amsthm}
\usepackage{hyperref}

\usetikzlibrary{shapes,backgrounds}
\usetikzlibrary{shapes.misc, positioning}
\usepackage{pgfplots}
\usepgfplotslibrary{fillbetween}
\usetikzlibrary{intersections}
\pgfdeclarelayer{ft}
\pgfsetlayers{main,ft}
\usetikzlibrary{positioning, fit}

\usetikzlibrary {positioning}
\tikzset{main node/.style={circle,fill=blue!40,draw,minimum size=2cm,inner sep=0pt},}

\usepackage{epsfig}
\usepackage{graphicx}

\usepackage{pst-node}
\usepackage{tikz-cd}
\pgfplotsset{compat=1.17}

\title{Representing preorders with injective monotones}
\author{Pedro Hack, Daniel A. Braun, Sebastian Gottwald}
\date{ }

\newtheorem{proposition}{Proposition}

\newtheorem{lemma}{Lemma}

\theoremstyle{definition}
\newtheorem{definition}{Definition}

\begin{document}

\maketitle

\begin{abstract}
We introduce a new class of real-valued monotones in preordered spaces, injective monotones. We show that the class of preorders for which they exist lies in between the class of preorders with strict monotones and preorders with countable multi-utilities, improving upon the known classification of preordered spaces through real-valued monotones. We extend several well-known results for strict monotones (Richter-Peleg functions) to injective monotones, we provide a construction of injective monotones from countable multi-utilities, and relate injective monotones to classic results concerning Debreu denseness and order separability. Along the way, we connect our results to Shannon entropy and the uncertainty preorder, obtaining new insights into how they are related. In particular, we show how injective montones can be used to generalize some appealing properties of Jaynes' maximum entropy principle, which is considered a basis for statistical inference and serves as a justification for many regularization techniques that appear throughout machine learning and decision theory.
\end{abstract}

\section{Introduction}
\label{intro}

The set of all preordered spaces $(X,\preceq)$ is structured according to how well their preorder can be represented by real-valued \emph{monotones}, that is, functions $u: X \to \mathbb{R}$ such that $x \preceq y$ implies $f(x) \leq f(y)$ $\forall x,y \in X$ \cite{evren2011multi,ok2002utility}. Two major classification methods can be distinguished depending on whether one considers a single monotone \cite{alcantud2016richter} or a whole family $U$ of monotones encapsulating all the information in $\preceq$, called a \emph{multi-utility} \cite{evren2011multi}. More precisely, if $U$ is a multi-utility for $(X,\preceq)$ then $\forall x,y \in X$ we have $x \preceq y$ if and only if $u(x) \leq u(y)$ $\forall u \in U$. Without further contraints, monotones and multi-utilities are, however, not very useful from a classification perspective as they exist for any preordered space. They become more useful when adding constraints. For example, there are preorderd spaces without \emph{strict} monotones, that is, without monotones $u$ such that, $u(x) <u(y)$ whenever\footnote{Here, $x\prec y$ means $x \preceq y$ and $\neg(y \preceq x)$} $x\prec y$. Strict monotones, also known as \emph{Richter-Peleg} functions, have been extensively studied \cite{alcantud2013representations,alcantud2016richter,richter1966revealed,peleg1970utility} and are related to other features of the preorder such as its maximal elements. In the case of multi-utilities, the cardinality is an important property for the classification of preordered spaces, with \emph{countable} multi-utilities playing a central role   \cite{bevilacqua2018representation}. Of particular importance are \emph{utility functions} \cite{debreu1954representation,debreu1964continuity}, that is, multi-utilities consisting of a single function\footnote{Notice, having a utility function implies $\preceq$ is total, that is, any pair of points $x,y \in X$ can be related by $\preceq$.}.

Here, we introduce \emph{injective monotones}, which are monotones $u$ such that $u(x)=u(y)$ implies both $x \preceq y$ and $y \preceq x$. Preorders for which they exist form a category between preorders with strict monotones and preorders with countable multi-utilities, as we show in Proposition \ref{monotones different}, Proposition \ref{countable implies injective monotone} and Proposition \ref{inj mono no count mu}. Hence we improve on the existing classification of preorders by adding a new distinct class. More precisely, in Section \ref{injective monotones} we define injective monotones and prove some simple properties. After discussing their relation to optimization in Section \ref{monotones and optimization}, we take a look at the role of multi-utilities in Section \ref{monot and multi-ut}, in particular we construct injective monotones from countable multi-utilities and show that the converse does not hold. Finally, in Section \ref{monot and denseness}, we consider separability properties of preorders that are sufficient for the existence of strict and injective monotones, introducing a new notion of Debreu separability, that allows to extend previous results on strict monotones to corresponding analogues for injective monotones.

In the following section, we introduce our running example to which we come back several times throughout the development of the general theory. In particular, we discuss the relation between the uncertainty preorder from majorization theory \cite{arnold2018majorization}, which has Shannon entropy as a strict monotone, and the \emph{maximum entropy principle} that appears in many different parts of science.

\section{Example: the uncertainty preorder and Shannon entropy}
\label{example}

The outcome of a random variable with a narrow probability distribution is easier to predict than the outcome of a random variable with a less concentrated distribution. For example, the result of throwing an unbalanced coin is easier to predict than the one of a balanced coin. In other words, a wider distribution contains more \emph{uncertainty} than a narrower distribution. This idea is captured by a binary relation on the space $\mathbb P_\Omega$ of probability distributions on a set $\Omega$: the uncertainty preorder $\preceq_U$, defined for finite $\Omega$ by 
\begin{equation}
\label{uncert rela}
    p \preceq_U q \ \iff \ u_i(p) \leq u_i(q) \ \  \forall i\in \{1,..,|\Omega|-1\} \, ,
\end{equation}
where $u_i(p) \coloneqq -\sum_{n=1}^{i} p_n^{\downarrow}$ and $p^{\downarrow}$ denotes the decreasing rearrangement of $p$ (same components as $p$ but ordered decreasingly). Notice, $\preceq_U$ is known in mathematics, economics, and quantum physics as \emph{majorization} \cite{hardy1952inequalities,marshall1979inequalities,arnold2018majorization,brandao2015second}, originally developed by Lorenz  \cite{lorenz1905methods} and Dalton \cite{dalton1920measurement} among others, to measure wealth and income inequality. An intuitive way to think of $p \preceq_U q$ is that $q$ is the result of finitely many transfers of pieces of probability from a more likely to a less likely option in $p$ \cite{gottwald2019bounded}. In other words, $q$ is more spread out or less biased, and thus, contains more uncertainty than $p$. For instance, a Dirac distribution is the smallest, and the uniform distribution is the largest, with respect to $\preceq_U$, among all distributions on $\Omega$.

There is, however, a downside to this intuitive notion of uncertainty: what if $p$ and $q$ do not have this relationship? For example, if $p=(0.6,0.2,0.2,0,..,0)$ and $q=(0.5,0.4,0.1,0,..,0)$, then $p$ and $q$ cannot be related by $\preceq_U$. Instead, the most common way to measure uncertainty is to use an entropy functional, such as the Shannon entropy, $H(p)\coloneqq -\mathbb E_p[\log p]$, or one of various alternative entropy proposals, including Renyi entropy \cite{renyi1961measures}, Tsallis entropy \cite{tsallis1988possible}, and many more \cite{csiszar2008axiomatic}. Even though, in general, $\preceq_U$ cannot be fully represented by any of these so-called generalized entropies $F$ \footnote{Since $\preceq_U$ is not total for any $|\Omega|>2$, it has no utility function}, it is noteworthy that all of them are \emph{monotones} with respect to $\preceq_U$. While the converse is not true for any \emph{single} $F$, there are collections $\mathcal F$ which constitute a multi-utility, e.g.~in the case of finite $\Omega$, $\mathcal F = \{\sum_{n=1}^{|\Omega|} f(p_n) \, | \, f \text{ concave} \}$ \cite{schur1923uber}, or even $\mathcal F = \{u_i\}_{i=1}^{|\Omega|-1}$ by the definition of $\preceq_U$ \eqref{uncert rela}.

The preference towards unbiased distributions, that is represented by any monotone of $\preceq_U$, is of particular relevance in the \emph{maximum entropy principle}, where (Shannon) entropy serves as a counter-acting force against the bias towards the maximal elements of a given ``energy'' function $E$. Going back to the \emph{principle of insufficient reason} \cite{bernoulli1713ars}, today the maximum entropy principle appears in virtually all branches of science. For example, it is often used as a general principle to explain the raison d'etre behind all kinds of ``soft'' versions of known machine learning methods, especially in reinforcement learning \cite{williams1991function,fox2016taming}, but also in models of robust and resource-aware decision making \cite{maccheroni2006ambiguity,still2009information,tishby2011information,ortega2013thermodynamics}. Basically, whenever there appears a trade-off between precision and uncertainty, there is a good chance that the maximum entropy principle is applied \cite{gottwald2020two}. 

The underlying goal of the maximum entropy principle is to select a \emph{typical} distribution among a set of candidate distributions satisfying a given constraint, usually of the form $\langle E\rangle_p = c$, where $\langle E\rangle_p$ denotes the expectation of a random variable $E$ with respect to the probability measure $p$. In Wallis' derivation of the maximum entropy principle, typicality is measured by the number of possibilities of assigning $n$ elements among $N$ groups, under the limit of infinitely many elements ($n\to\infty$) such that the statistical probabilities  $p_i = \frac{n_i}{N}$ of belonging to a specific group $i$ remain finite ($n_i$ denotes the number of elements in group $i$) \cite{jaynes2003probability}. However, we can also think of typicality as containing the least amount of bias, or in other words, the maximal amount of uncertainty. Thus, when considering the uncertainty preorder $\preceq_U$ as the most basic way to decide about the difference in uncertainty between two distributions, then the ultimate goal of the maximum entropy principle becomes to obtain the maximal elements of $\preceq_U$, inside the given constraint set.

Even though, generally, we are not guaranteed to find all maximal elements of $\preceq_U$ when maximizing entropy, maximum entropy solutions are in fact maximal elements of $\preceq_U$, as entropy is a strict monotone. Furthermore, since the maximum entropy principle maximizes a strictly concave functional $H$ over a convex subset, it yields a unique maximal element of $\preceq_U$. In contrast, injective monotones, which exist for $\preceq_U$ (see Proposition \ref{countable implies injective monotone}), preserve this uniqueness property up to equivalence (see Proposition \ref{optimization charac of R-P}), without asking for the additional structural requirements of concavity.

\section{Injective monotones} \label{injective monotones}

A \emph{preorder} $\preceq$ on a set $X$ is a reflexive ($x \preceq x$ $\forall x \in X$) and transitive ($x \preceq y$ and $ y \preceq z$ implies $x \preceq z$ $\forall x,y,z \in X$) binary relation. A tuple $(X, \preceq)$ is called a \emph{preordered space}. An antisymmetric ($x \preceq y$ and $y \preceq x$ imply $x=y$ $\forall x,y \in X$) preorder $\preceq$ is called a \emph{partial order}. The relation $x \sim y$, defined by $x\preceq y$ and $y\preceq x$, forms an \emph{equivalence relation} on $X$, that is, it fulfills the reflexive, transitive and symmetric ($x \sim y$ if and only if $y \sim x$ $\forall x,y \in X$) properties. Notice, a preorder $\preceq$ is a partial order on the quotient set $X/\mathord{\sim} = \{[x]|x\in X\}$, consisting of all equivalence classes $[x] = \{y\in X| y\sim x\}$. In case $x \preceq y$ and $\neg(x\sim y)$
for some $x,y \in X$ we say $y$ is \emph{strictly preferred} to $x$, denoted by $x \prec y$. If $\neg(x \preceq y)$ and $\neg(y \preceq x)$, we say $x$ and $y$ are \emph{incomparable}, denoted by $x \bowtie y$. Whenever there are no incomparable elements a preordered space is called \emph{total}. By the Szilprajn extension theorem \cite{szpilrajn1930extension}, \textit{every partial order can be extended to a total order}, that is, to a partial order that is total. Notice, the set $\mathbb P_\Omega$ of probability distributions on $\Omega$ equipped with the uncertainty preorder $\preceq_U$ forms a non-antisymmetric preordered space, because equivalent elements are only equal up to permutations \cite{arnold2018majorization}.

A real-valued function $f:X \rightarrow \mathbb{R}$ is called a \emph{monotone} if $x \preceq y$ implies $f(x) \leq f(y)$. If also the converse is true, then $f$ is called a \emph{utility function}. Furthermore, if $f$ is a monotone and $x \prec y$ implies $f(x)<f(y)$, then $f$ is called a \emph{strict monotone} (or a \emph{Richter-Peleg function} \cite{alcantud2016richter}).

\begin{definition}[Injective monotones] 
A monotone $f:X\to \mathbb R$ on a preordered space $(X,\preceq)$ is called an \emph{injective monotone}  if $f(x)=f(y)$ implies $x \sim y$, that is, if $f$ is injective considered as a function on the quotient set $X/\mathord{\sim}$.
\end{definition}

Clearly, an injective monotone is also a strict monotone, since $x\prec y$ and $f(x)=f(y)$ contradicts injectivity. The converse is not true, for example, Shannon entropy is a strict monotone for the uncertainty preorder $\preceq_U$ (Appendix \ref{majo}) but not an injective monotone, nor a utility. In fact, preorders that have an injective monotone form a class in between preorders that have a strict monotone and preorders that have a utility function.  

\begin{proposition} ~
\label{monotones different}
\begin{enumerate}[label=(\roman*)]
\item There are preorders with strict monotones but without injective monotones.
\item There are preorders with injective monotones and without utility functions.
\end{enumerate}
\end{proposition}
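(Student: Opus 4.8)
The plan is to prove each part separately by exhibiting an explicit counterexample, leaning on two facts already in the excerpt: an injective monotone is automatically a strict monotone, and (by the footnote) the existence of a utility function forces the preorder to be total.

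Part (ii) I would settle first, since it is essentially immediate. Take the smallest non-total preorder, the two-element antichain $X=\{a,b\}$ with $\preceq$ the equality relation, so that $a\bowtie b$. Then $f$ with $f(a)=0$, $f(b)=1$ is vacuously a monotone (the only comparabilities are $a\preceq a$ and $b\preceq b$) and is injective on the quotient $X/\mathord{\sim}=X$, hence an injective monotone; yet $\preceq$ is not total, so it has no utility function. The uncertainty preorder $\preceq_U$ for $|\Omega|\geq 3$ would serve equally well, being non-total while carrying an injective monotone via Proposition~\ref{countable implies injective monotone}.

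For part (i) I want a preorder admitting a strict monotone on which \emph{no} monotone can be injective on the quotient. The cheapest route is a cardinality obstruction: let $X$ be an antichain ($\preceq$ the equality relation) with $|X|>|\mathbb R|$, e.g.\ $X=\mathcal P(\mathbb R)$. Every function $X\to\mathbb R$ is then a monotone and vacuously a strict monotone, since no pair satisfies $x\prec y$; but an injective monotone would be an injection $X\hookrightarrow\mathbb R$, impossible by cardinality. A more satisfying example, where the obstruction is order-theoretic rather than merely set-theoretic, is $X=\mathbb R\times\{0,1\}$ with $(a,i)\preceq(b,j)$ iff $a<b$, or $a=b$ and $i=j$. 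The two points over each $a$ are then incomparable, so $\preceq$ is a non-total partial order, and the projection $g(a,i)=a$ is a strict monotone. The key step is to rule out any injective monotone $h$: since $(a,0)\bowtie(a,1)$ are not $\sim$-equivalent, injectivity forces $h(a,0)\neq h(a,1)$, and since $a<b$ gives $(a,i)\prec(b,j)$ for all $i,j$ we get $h(a,i)<h(b,j)$, so the open intervals $I_a=(\min(h(a,0),h(a,1)),\max(h(a,0),h(a,1)))$ are non-empty and pairwise disjoint; picking a rational in each $I_a$ yields an injection $\mathbb R\hookrightarrow\mathbb Q$, a contradiction.

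The main obstacle is exactly this last impossibility argument for (i): one must exclude all monotones simultaneously, not merely test a given candidate. The disjoint-interval / rational-selection trick—equivalently, that $\mathbb R$ admits only countably many pairwise disjoint non-degenerate intervals—is the one genuinely non-routine ingredient, and it is what makes the structural example work; everything else reduces to direct verification.
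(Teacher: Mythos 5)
Your proposal is correct, and it is a mix of the paper's route and a genuinely different one. For part (ii) and for your first example in part (i) you are essentially doing what the paper does: the paper also rules out utility functions via non-totality (it uses $\mathbb{R}$ with $0$ made incomparable to every other point, and the identity as injective monotone, which is your two-point antichain idea in a larger setting), and it also rules out injective monotones on $\mathcal{P}(\mathbb{R})$ by Cantor's theorem (its preorder is your antichain augmented by the single relation $\{0\}\prec\{1\}$, which only serves to make the strict monotone non-vacuous). Where you genuinely depart from the paper is your second example for part (i): on $X=\mathbb{R}\times\{0,1\}$ the quotient has cardinality $\mathfrak{c}$, so no cardinality obstruction is available, and you instead exclude injective monotones by the disjoint-open-intervals argument --- since $h(a,0)\neq h(a,1)$ by injectivity and $\max_i h(a,i)<\min_j h(b,j)$ whenever $a<b$ by strict monotonicity, the intervals $I_a$ are nonempty and pairwise disjoint, and choosing a rational in each gives an injection $\mathbb{R}\hookrightarrow\mathbb{Q}$. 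This is the same device as Debreu's classical proof that the lexicographic order on $\mathbb{R}^2$ admits no utility function, here adapted to injective monotones. It buys a strictly stronger conclusion than the paper's proof: the gap between preorders with strict monotones and those with injective monotones is not purely a matter of cardinality, since it already occurs on a space whose quotient injects into $\mathbb{R}$; the paper's example leaves that question open. The cost is a longer argument relying on the countable dense subset of $\mathbb{R}$, whereas the paper's Cantor argument is a one-liner.
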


\begin{proof}
$(i)$ Consider $(\mathcal{P}(\mathbb{R}), \preceq)$, the power set $\mathcal P(\mathbb R)$ of the reals equipped with the preorder $\preceq$ defined by $U\preceq V$ if and only if $U=V$, or $U=\{0\}$ and $V=\{1\}$. Then $v: \mathcal{P}(\mathbb{R}) \rightarrow \mathbb{R}$, given by $v(\{1\})=1$ and $v(U)=0$ $\forall U \neq \{1\}$, is a strict monotone. However, there cannot be injective monotones, because here $|\mathcal P(\mathbb R)/{\sim}| = |\mathcal P(\mathbb R)|$ and by Cantor's theorem the cardinality of $\mathbb R$ is strictly smaller than the cardinality of $\mathcal P(\mathbb R)$.

$(ii)$ Consider $(\mathbb{R}, \preceq)$, where $x \preceq y$ if and only if $x \leq y$ and $x,y \neq 0$ or $x=y=0$. The identity $I: \mathbb{R} \rightarrow \mathbb{R}$ is an injective monotone. However, $(\mathbb{R}, \preceq)$ is non-total since $0 \bowtie x$ $\forall x \in \mathbb{R}/ \{0\}$ and, thus, has no utility function.
\end{proof}

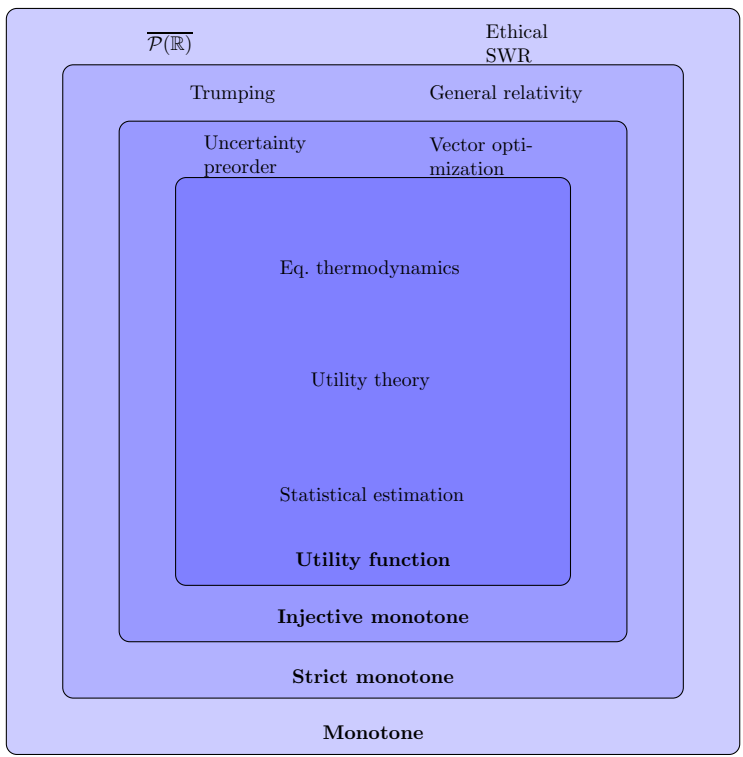
\begin{figure}[!t]
\centering
\begin{tikzpicture}[scale=0.75, every node/.style={transform shape}]
\node[rounded corners, draw, fill=blue!20 ,text height = 13cm,minimum     width=13cm, label={[anchor=south,above=1.5mm]270: \textbf{Monotone}}]  (main) {};
\node[rounded corners, draw, fill=blue!30, text height =11cm, minimum width = 11cm,label={[anchor=south,above=1.5mm]270: \textbf{Strict monotone}}] at (main.center)  (semi) {};
\node[rounded corners, draw,fill=blue!40, text height = 9cm, minimum width = 9cm,label={[anchor=south,above=1.5mm]270:\textbf{Injective monotone}}] at (main.center) (active) {};
\node[rounded corners, draw,fill=blue!50, text height = 7cm, minimum width = 7cm,label={[anchor=south,above=1.5mm]270:\textbf{Utility function}}] at (main.center) (non) {};
\node[text width=2cm] at (-3,6) {$\overline{\mathcal{P}(\mathbb{R})}$};
\node[text width=2cm] at (3,6) {Ethical SWR};
\node[text width=2cm] at (-2,4) {Uncertainty preorder};
\node[text width=2cm] at (2,4) {Vector optimization};
\node[text width=1.5cm] at (-2.5,5.1) {Trumping};
\node[text width=3cm] at (2.5,5.1) {General relativity};
\node[text width=2.2cm] at (0,0) {Utility theory};
\node[text width=3.3cm] at (0,2) {Eq.~thermodynamics};
\node[text width=3.3cm] at (0,-2) {Statistical estimation};
\end{tikzpicture}
\caption{Classification of preordered spaces according to the existence of various classes of monotones. The newly introduced class of preorders with injective monotones is strictly contained within the class of preorders with strict monotones, is strictly larger than the class of preorders with utility functions, and contains commonly used examples such as the uncertainty preorder and vector optimization. See Section \ref{conclu} and Appendix \ref{no strict mono} for a short description of the other examples in the figure. Notice we have classified both trumping and general relativity according to current knowledge, they may have injective monotones (see Section 
\ref{conclu}).}
\label{fig:classification}
\end{figure}

Since every preorder has a monotone (constant functions) and there are preorders without strict monotones (see Appendix \ref{no strict mono} for an example), we arrive at the picture shown in Figure \ref{fig:classification}. Notice, the closer we are to the center, the better a monotone represents the underlying preorder. In particular, injective monotones contain more information about the preorder than strict monotones. 

Nevertheless, in well-behaved cases, it is possible to construct an injective monotone out of a strict monotone. A negative example is the strict monotone that appears in the proof of $(i)$ in Proposition \ref{monotones different}, which  maps uncountably many incomparable elements to a single number (zero). If a strict monotone fails to be an injective monotone because of only countably many points, however, then it can easily be turned into an injective monotone by consecutive elimination.   

\begin{proposition}
\label{ R-P implies injective monotone}
A preordered space $(X, \preceq)$ has an injective monotone if and only if it has a strict monotone $f$ whose non-injective set
\begin{equation*}
I_f:=\{ x \in X| \text{ }\exists y \in X \text{ }s.t. \text{ } f(x)=f(y) \text{ and } x \bowtie y \}
\end{equation*}
is countable.
\end{proposition}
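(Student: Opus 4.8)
The forward direction is immediate: if $g$ is an injective monotone then it is in particular a strict monotone, and its non-injective set is empty, since $g(x)=g(y)$ forces $x\sim y$ and thereby rules out $x\bowtie y$. The empty set is countable, so it suffices to prove the converse.

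For the converse I would start by recording the structure of the collisions of a strict monotone $f$ with $I_f$ countable. Because $f$ is strict, $f(x)=f(y)$ excludes both $x\prec y$ and $y\prec x$; hence within each fibre $f^{-1}(c)$ any two elements are either equivalent or incomparable. Consequently, if $f^{-1}(c)$ meets more than one $\sim$-class, then every point of $f^{-1}(c)$ lies in $I_f$. I would use this to conclude that all points sitting over such ``bad values'' belong to the countable set $I_f$, so that there are only countably many bad values $c_1,c_2,\dots$ and only countably many $\sim$-classes over them; and that for every ``good'' $x\notin I_f$ one has $f^{-1}(f(x))=[x]$, whence good and bad values are disjoint.

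The key step is to reparametrize $\mathbb R$ so as to open a gap at each bad value, and then to redistribute the colliding classes into these gaps. Choose a strictly increasing $\psi:\mathbb R\to\mathbb R$, continuous except for an upward jump exactly at each $c_k$; concretely $\psi(t)=\arctan t+\sum_{k:\,c_k<t}2^{-k}$ is strictly increasing with $\psi(c_k^+)-\psi(c_k)\ge 2^{-k}>0$. Then $\psi\circ f$ is again a strict monotone with the same non-injective set, but now the open interval $J_k:=(\psi(c_k),\psi(c_k^+))$ is disjoint from $\operatorname{ran}\psi$, the $J_k$ are pairwise disjoint, and every $u$ obeys $\psi(f(u))\ge\psi(c_k^+)$ whenever $f(u)>c_k$ and $\psi(f(u))\le\psi(c_k)$ whenever $f(u)<c_k$. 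I then define $g:=\psi\circ f$ off $I_f$, and send each $\sim$-class at a bad value $c_k$ to a single value in $J_k$, chosen distinct across all classes. Since the classes at a fixed $c_k$ are pairwise incomparable, there is no order constraint among them, which is exactly what makes this relocation harmless.

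It remains to verify that $g$ is an injective monotone. Monotonicity follows from a short case analysis on $x\preceq y$: if $f(x)=f(y)$ then strictness gives $x\sim y$, so $x,y$ share a class and $g(x)=g(y)$; if $f(x)<f(y)$ then the jump estimates give $g(x)<\psi(f(y))\le g(y)$. Injectivity on the quotient holds because good values stay in $\operatorname{ran}\psi$, the relocated classes land in the gaps $J_k$ (which avoid $\operatorname{ran}\psi$ and each other), and the chosen gap-values are distinct. The one genuine obstacle is that good values might accumulate at a bad value $c_k$ from both sides, ``squeezing'' a colliding class and forbidding any local perturbation of $f$ itself; this is precisely what the reparametrization resolves, since the jump of $\psi$ at $c_k$ creates a fixed gap $J_k$ that no nearby good value can enter, leaving room to separate the incomparable classes while preserving the order.
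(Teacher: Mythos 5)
Your proof is correct, and it takes a genuinely different route from the paper's. The paper works by consecutive elimination: it enumerates the points $\{x_n\}_{n\geq 0}$ of $I_f$ and, at stage $n$, adds $2^{-n}$ (resp.\ $1$ at stage $0$) to the value of every point whose current value is at least that of $x_n$ and which is not equivalent to $x_n$, thereby opening a gap just above the class $[x_n]$; the injective monotone is then the pointwise limit of this sequence, whose existence and properties are asserted rather than checked in detail. You instead give a one-shot construction: you first isolate the combinatorial structure (each fibre of a strict monotone splits into pairwise incomparable $\sim$-classes, so the ``bad'' values are countably many and their fibres exhaust $I_f$), then compose $f$ with a jump function $\psi$ to open a fixed gap $J_k$ at each bad value $c_k$, and finally relocate the countably many classes over $c_k$ to distinct points of $J_k$. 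Your approach buys the absence of any limiting process---monotonicity and injectivity up to equivalence are verified by a finite case analysis, with no convergence argument---and it makes the role of countability transparent, since a strictly increasing function on $\mathbb{R}$ admits only countably many jumps. The paper's approach buys brevity and uniformity: it never needs to organize $I_f$ into fibres and classes, the same update rule handles every step, and each intermediate $f_n$ is itself a strict monotone obtained directly from $f$, which is the picture the authors reuse when remarking that the technique breaks down for uncountable $I_f$ (e.g.\ for Shannon entropy on the uncertainty preorder). Both arguments rest on the same underlying idea of opening gaps above colliding incomparable classes; you open all gaps at once in the codomain, while the paper opens them successively by deforming the function.
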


\begin{proof}
By definition, for an injective monotone $f$ we have $I_f= \emptyset$. Conversely, consider a strict monotone $f$ with a countable non-injective set. Given a numeration $\{x_n\}_{n \geq 0}$ of $I_f$, define $f_0: X \rightarrow \mathbb{R}$ by
\begin{equation*}
    f_0(x):= 
    \begin{cases}
    f(x) + 1 & \text{if} \text{ } f(x) \geq f(x_0) \text{ and } \neg(x \sim x_0)\\
    f(x) & \text{else.}
    \end{cases}
\end{equation*}
Notice, by definition, $\forall x,y \in X$, $f(x) \leq f(y)$ implies $f_0(x) \leq f_0(y)$, $I_{f_0} \subset I_f$, and $f_0$ is injective up to equivalence at $x_0$, in particular $x_0 \notin I_{f_0}$. Therefore, we can consecutively eliminate the elements in $I_f$ by defining for all $n\in\mathbb N$, $f_n(x) \coloneqq f_{n-1}(x) + 2^{-n}$ if $f_{n-1}(x) \geq f_{n-1}(x_n)$ and $\neg(x \sim x_n)$, and $f_n(x) \coloneqq f_{n-1}(x)$ otherwise, analogously to $f_0$. It is then straightforward to see that the pointwise limit $c(x)\coloneqq \lim_{n \to \infty} f_n(x)$ exists for all $x\in X$ and that $c$ is an injective monotone. 
\end{proof}

Notice, the technique in the proof of Proposition \ref{ R-P implies injective monotone} does not work if $I_f$ is uncountable.
In particular, it cannot be used to construct an injective monotone from Shannon entropy  $f=H$ for the uncertainty preorder $\preceq_U$, because if $N\coloneqq |\Omega|\geq 3$ then for all $c\in (0,\log N)$ there are $p,q \in \mathbb P_\Omega$ with $c=H(p)=H(q)$ but $p\bowtie q$ (see Appendix \ref{majo}). In other words, we can construct an injective map $g: ( 0,\log N ) \rightarrow I_H$ and thus $I_H$ has the same cardinality as $\mathbb{R}$, in particular $I_H$ is not countable.

\section{Relating monotones to optimization} \label{monotones and optimization}

An element $x\in X$ is called a \emph{maximal element of $\preceq$} if there exists no $y\in X$ such that $x\prec y$. For any $B\subseteq X$, an element $x \in B$ is called a \emph{maximal element of $\preceq$ in $B$} if there exists no $y\in B$ such that $x\prec y$. 

\begin{definition}[Representing maximal elements] \label{def:optima}
We say, a function $f:X \rightarrow \mathbb{R}$ is \emph{effective for $B \subseteq X$} if $\mathrm{argmax}_B \, f \neq \emptyset$, where $\mathrm{argmax}_B  f \coloneqq \{x \in B|\not \exists y \in B \text{ such that } f(x)<f(y) \}$. We say, a function $f: X \rightarrow \mathbb{R}$ \emph{represents maximal elements of $\preceq$}, if for any $B\subseteq X$
\[
\mathrm{argmax}_B  f \subseteq B^{\preceq}_M \, ,
\]
where $B^{\preceq}_M$ denotes the set of maximal elements of $\preceq$ in $B$. Similarly, we say, a function $f$ \emph{injectively represents maximal elements of $\preceq$}, if for any $B \subseteq X$ for which $f$ is effective, there exists $x_0 \in B^\preceq_M$ such that 
\[
\mathrm{argmax}_{B} f = [x_0]|_B
\] 
where $[x_0]|_B$ is the equivalence class of $x_0$ restricted to $B$. Moreover, we say, $(X, \preceq)$ has an \emph{(injective) optimization principle} if there exists a function $f:X \rightarrow \mathbb{R}$ which (injectively) represents maximal elements of $\preceq$.
\end{definition}

Even though Shannon entropy does not represent $\preceq_U$ as a utility, its property as a stict monotone guarantees that its maxima are in fact maximal elements  of $\preceq_U$, i.e. $H$ represents maximal elements  of the uncertainty preorder according to Definition \ref{def:optima}. Indeed, any $p \in \mathrm{argmax}_B H$ is a maximal element of $\preceq_U$ for any $B\subseteq \mathbb P_\Omega$ on which $H$ is effective, as $p\prec q$ for some $q\in B$ would lead to the contradiction $H(q)>H(p)$. In fact, representing maximal elements  is closely related to being a monotone for preorders in general.

\begin{proposition}
\label{optimization charac of R-P}
Given a preordered space $(X, \preceq)$ and a monotone $u:X \rightarrow \mathbb{R}$, then 
\begin{enumerate}[label=(\roman*)]
\item $u$ is a strict monotone if and only if $u$ represents maximal elements  of $\preceq$. 
\item $u$ is an injective monotone if and only if $u$ injectively represents maximal elements  of $\preceq$.
\end{enumerate}
\end{proposition}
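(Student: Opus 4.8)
The plan is to prove both equivalences by relating the defining inequality of a (strict or injective) monotone to the behavior of $u$ on carefully chosen subsets $B$, using two-element sets for the ``only if'' directions. For part $(i)$, the forward direction is immediate: if $u$ is a strict monotone and some $x \in \mathrm{argmax}_B u$ were not maximal in $B$, there would be $y \in B$ with $x \prec y$, whence $u(x) < u(y)$, contradicting $x \in \mathrm{argmax}_B u$; so $\mathrm{argmax}_B u \subseteq B^{\preceq}_M$. For the converse I would test the representation property on $B = \{x,y\}$ whenever $x \prec y$: here $x \notin B^{\preceq}_M$, so if $u(x) = u(y)$ held then $x$ would lie in $\mathrm{argmax}_B u$ but not in $B^{\preceq}_M$, contradicting that $u$ represents maximal elements; hence $u(x) < u(y)$ and $u$ is strict.

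For part $(ii)$ the strategy is to bootstrap from part $(i)$, since an injective monotone is in particular strict. Two elementary observations do most of the work: $(a)$ any two elements of $\mathrm{argmax}_B u$ share the same $u$-value (otherwise the smaller-valued one would fail to be in the argmax), and $(b)$ $x \sim x_0$ forces $u(x) = u(x_0)$, by applying monotonicity to both $x \preceq x_0$ and $x_0 \preceq x$. For the forward direction, given $B$ on which $u$ is effective, I pick any $x_0 \in \mathrm{argmax}_B u$; part $(i)$ places $x_0 \in B^{\preceq}_M$. Observation $(a)$ together with injectivity shows every element of $\mathrm{argmax}_B u$ is $\sim$-equivalent to $x_0$, giving $\mathrm{argmax}_B u \subseteq [x_0]|_B$, while observation $(b)$ combined with $x_0$ attaining the maximal value yields the reverse inclusion.

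The converse of $(ii)$ is where I expect the only real care to be needed. I first want to recover that $u$ represents maximal elements so I can reuse part $(i)$: for any $B$ on which $u$ is effective the hypothesis gives $\mathrm{argmax}_B u = [x_0]|_B$, and I must check $[x_0]|_B \subseteq B^{\preceq}_M$, i.e.\ that an element $\sim$-equivalent to a maximal element is itself maximal, which follows from transitivity by ruling out $x_0 \prec w$ for any $w \in B$. Once $u$ is known to be strict, I close the argument by applying the injective-representation property to $B = \{x,y\}$ whenever $u(x) = u(y)$: both points then lie in $\mathrm{argmax}_B u = [x_0]|_B$, so $x \sim x_0$ and $y \sim x_0$, and transitivity gives $x \sim y$, which is exactly injectivity on the quotient. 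The main obstacle throughout is the bookkeeping with the restricted equivalence classes $[x_0]|_B$ under a possibly non-antisymmetric $\preceq$, rather than any deep idea.
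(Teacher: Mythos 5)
Your proof is correct and takes essentially the same route as the paper's: direct contradiction arguments for the forward implications and two-element test sets $B=\{x,y\}$ for both converses. If anything, you are more thorough than the paper, which leaves implicit the reverse inclusion $[x_0]|_B \subseteq \mathrm{argmax}_B\, u$ and the fact that $x_0 \in B_M^\preceq$ in part $(ii)$; your detour through strictness in the converse of $(ii)$ is harmless but unnecessary, since (as the paper remarks) the final two-element-set argument needs only the hypothesis that $u$ is a monotone.
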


\begin{proof}
$(i)$ If $u$ is a strict monotone, then $\mathrm{argmax}_B u \subseteq B_M^\preceq$ (by the same argument as for entropy). Conversely, consider $x,y\in X$ with $x\prec y$. For $B\coloneqq \{x,y\}$, we have $B_M^\preceq = \{y\}$ and thus $y = \mathrm{argmax}_B u$, i.e.~$u(x)<u(y)$.

$(ii)$ For any $B\subseteq X$ on which $u$ is effective, if $x,y$ $\in \text{argmax}_B u$, we have $u(x)=u(y)$ and, since $u$ is an injective monotone, $x\sim y$. Conversely, consider $x,y\in X$ and $B:=\{x,y\}$. 
If $u(x)=u(y)$ then by hypothesis $\{x, y\} = \text{argmax}_{x\in B} \{u(x)\}=[x_0]|_B$ for some $x_0 \in B$. In particular, $x \sim y$.
\end{proof}

Notice, for the ``if'' part in $(ii)$ we do not have to assume that $u$ is a monotone, that is, if the maxima of some real-valued function $u$ form an equivalence class in the set of  maximal elements, then it already follows that $u$ is a monotone.

For any preordered space $(X,\preceq)$, thus, the existence of a strict monotone implies the existence of an optimization principle and the existence of an injective monotone is equivalent to the existence of an injective optimization principle. One can contrast the \emph{global} injective representation of maximal elements  which characterizes injective monotones in Proposition \ref{optimization charac of R-P} with \emph{local} approaches, for some specific $B\subseteq X$, present in the literature  \cite{white1980notes,bevilacqua2018multiobjective}.

Choosing a particular strict monotone $u$ and optimizing it in a set $B$ might, however, not yield all the maximal elements  in $B_M^\preceq$. {\color{black}For example, take $p,q\in\mathbb P_\Omega$ with $p\bowtie q$ and $H(p) < H(q)$, then $B=\{p,q\}$ has the two maximal elements $p$ and $q$, but $\mathrm{argmax}_B\, H = \{p\}$. Notice, this is not only an issue for trivial examples like this, but also happens for the maximum entropy principle with linear constraint sets. In particular, if $B=\{ p \,| \, \langle E \rangle = c\}$, for a given random variable $E$ and some $c\in\mathbb R$, crosses two incomparable elements that turn out to be maximal (see Figure \ref{fig:maxent}), then only part of the actual maximal elements of $\preceq_U$ can be found by maximizing entropy.}

Similarly, while optimizing an \emph{injective} monotone in a set $B$ results in equivalent elements, in general we only find a slice of the set of all maximal elements in $B$. In fact, for every maximal element $x$ in $B_M^\preceq$ we can construct an injective monotone $c$ such that $x\in \mathrm{argmax}\,c$ (e.g., in the proof of Proposition \ref{strict mono equi r-p repre} below, take $c_x$ if $x\in A_c$ and $c$ otherwise). This means that the problem of selecting a maximal equivalence class can be replaced by the problem of selecting an injective monotone.

In the following section, we show that injective monotones exist for a large class of preorders, including the uncertainty preorder.

\begin{figure}[ht]
\centering
\begin{tikzpicture}[scale=0.6]

\draw (0,6)--(-6,-6)--(6,-6)--cycle;
\draw (-3,0)--(3,0)--(0,-6)--cycle;
\draw (0,0)--(-1.5,-3)--(1.5,-3)--cycle;
\draw [fill=blue!30] (-3,0)--(0,6)--(3,0)-- node[above=1cm] {\textbf{$\succ_U$}} cycle;
\draw [fill=blue!30] (3,0)--(6,-6)-- node[above=1cm] {\textbf{$\succ_U$}} (0,-6)--cycle;
\draw [fill=blue!30] (-6,-6)-- node[above=1cm] {\textbf{$\succ_U$}} (0,-6)--(-3,0)--cycle;
\draw [fill=blue!40] (0,0)--(-1.5,-3)-- node[above=0.5cm] {\textbf{$\prec_U$}}(1.5,-3)--cycle;
\draw [fill=white] (0,0)--(-3,0)--(-1.5,-3)--cycle;
\draw [fill=white] (0,0)--(3,0)--(1.5,-3)--cycle;
\draw [fill=white] (0,-6)--(-1.5,-3)--(1.5,-3)--cycle;

\draw [line width=0.4mm] (-1.5,-6)-- (-1.5,3);
\draw [line width=0.0mm] (-1.5,1)-- node[ below left=0.05cm] {\textbf{$B$}} (-1.5,1);
\filldraw (-1.5,-3) circle (3pt) node[left=0.3mm] {$p$};
\filldraw (-1.5,-2) circle (3pt) node[left=0.3mm] {$q$};

\end{tikzpicture}
\caption{Example for when the maximum entropy principle does not yield all maximal elements of $\preceq_U$ in some $B \subseteq P_\Omega$. Here, we show the usual visualization of the 2-simplex, that is, the set of all probability distributions in $P_\Omega$ for $|\Omega|=3$. Let the energy function $E$ be given by $E(x_1)\coloneqq 1$, $E(x_2)\coloneqq-1$, and $E(x_3)\coloneqq 0$, and let $B$ be given by the constraint $\langle E \rangle = \frac{1}{4}$, represented by the vertical line. The distribution $p=(1/2, 1/4, 1/4)$ is a maximal element in $B$, because any other element of $B$ is either smaller than $p$ (belongs to an outer blue region) or incomparable (belongs to the white region). However, $q=(9/20,,4/20,7/20)$ is in $B$ and $H(p) < H(q)$. As a result, $p$ is a maximal element of $B$ which is not obtained via the maximum entropy principle.}
\label{fig:maxent}
\end{figure}
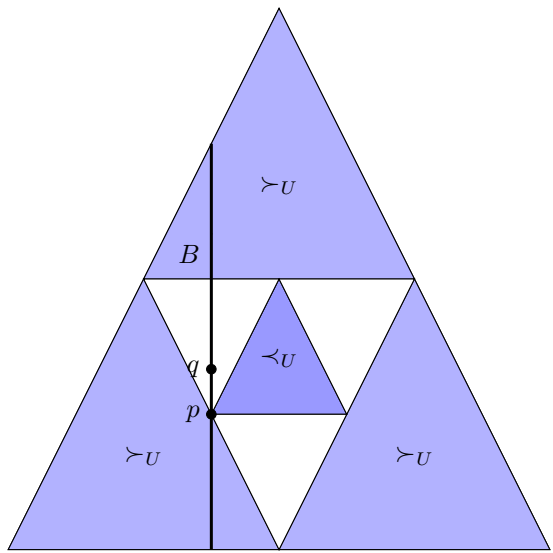

\section{Relating monotones to multi-utilities} \label{monot and multi-ut}

Although it is not possible to capture all information about a non-total preorder using a single
real-valued function, a family of functions may be used instead.
A family $V$ of real-valued functions $v: X \rightarrow \mathbb{R}$ is called a \emph{multi-utility (representation) of $\preceq$} if 
\begin{equation*}
x \preceq y \iff v(x) \leq v(y)  \text{ }\forall v \in V \, .
\end{equation*}
Whenever a multi-utility consists of strict monotones it is called a \emph{strict monotone} (or \emph{Richter-Peleg} \cite{alcantud2016richter}) \emph{multi-utility (representation) of $\preceq$}. Analogously, if the multi-utility consists of injective monotones, we call it an \emph{injective monotone multi-utility (representation) of $\preceq$}. 

It is straightforward to see that every preordered space $(X,\preceq)$ has the multi-utility $(\chi_{i(x)})_{x\in X}$, where $\chi_A$ denotes the characteristic function of a set $A$ and $i(x)\coloneqq \{y\in X| x\preceq y\}$ \cite{ok2002utility}. Moreover, if there exists a strict monotone $u$, then a multi-utility $U$ only consisting of strict monotones can easily be constructed from a given multi-utility $V$ by $U\coloneqq \{v + \alpha u\}_{v \in V,\alpha>0}$ \cite{alcantud2013representations}. Even though this construction does not work directly in the case of injective monotones, a simple modification does, where special care is given to incomparable elements.

\begin{proposition}
\label{strict mono equi r-p repre}
Let $(X,\preceq)$ be a preordered space. There exists an injective monotone if and only if there exists an injective monotone multi-utility. 
\end{proposition}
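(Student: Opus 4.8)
The plan is to prove the equivalence in two directions. The forward direction is trivial: an injective monotone is by definition a multi-utility consisting of a single injective monotone, hence an injective monotone multi-utility. The substantive content is the converse, where I must manufacture a single injective monotone from a (possibly large) family of injective monotones. The natural starting point is the standard trick recalled just before the statement: if $V$ is a multi-utility and $u$ is a strict monotone, then $U=\{v+\alpha u\}_{v\in V,\alpha>0}$ is a strict monotone multi-utility. Since any injective monotone is in particular a strict monotone, I already have a strict monotone $u$ on hand, and I expect that combining the elements of an injective monotone multi-utility into one function will land me close to an injective monotone, after which I can invoke Proposition \ref{ R-P implies injective monotone} to finish.

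Concretely, I would first attempt to aggregate the multi-utility into a single real-valued function. A family of functions cannot be summed naively because the index set may be uncountable and the sum need not converge, so the first step is to reduce to a countable or otherwise summable subfamily that still retains enough separating power. The key observation is that I do not need the aggregated function to be a utility; I only need it to be a strict monotone whose non-injective set $I_f$ is countable, because Proposition \ref{ R-P implies injective monotone} then upgrades it to a genuine injective monotone by consecutive elimination. This reframing is what makes the problem tractable: the full separating strength of the multi-utility is overkill, and I can afford to lose comparability information on a countable set of problematic points.

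The technical heart of the construction, and where I expect the main obstacle, is controlling the incomparable elements, as the remark preceding the proposition signals (``special care is given to incomparable elements''). The naive weighted combination $f=\sum_n 2^{-n} g_n$ of countably many injective monotones $g_n$, suitably normalized to bounded functions, will be a strict monotone and will separate any pair $x,y$ that some $g_n$ separates with $g_n(x)\neq g_n(y)$; but two incomparable elements $x\bowtie y$ could in principle be mapped to the same value by the combined function even though each individual $g_n$ is injective up to equivalence, since injective monotones separate comparable points, not incomparable ones. So the real work is to arrange the construction so that the set of such collisions, namely $I_f$, is countable, after which Proposition \ref{ R-P implies injective monotone} applies. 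I would therefore build $f$ as a bounded combination of an injective monotone together with the strict-monotone-multi-utility trick, and then argue that $I_f$ is countable either by a direct cardinality estimate on the collision set or by exploiting injectivity of a fixed $g_0$ up to equivalence to pin down equivalence classes, leaving only countably many genuinely incomparable collisions to eliminate. Once $I_f$ is shown countable, the conclusion is immediate from Proposition \ref{ R-P implies injective monotone}.
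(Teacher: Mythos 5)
Your proposal inverts which direction of the equivalence is trivial, and the error is fatal. You claim that ``an injective monotone is by definition a multi-utility consisting of a single injective monotone.'' This is false: a multi-utility $V$ must satisfy the biconditional $x \preceq y \iff v(x) \leq v(y)$ $\forall v \in V$, so a singleton $\{c\}$ is a multi-utility only if $c$ is a utility function, which forces $\preceq$ to be total. On a non-total preorder, an injective monotone $c$ assigns distinct values to any incomparable pair $x \bowtie y$ (that is exactly injectivity on $X/\mathord{\sim}$), say $c(x) < c(y)$; the multi-utility condition would then force $x \preceq y$, a contradiction. Proposition \ref{monotones different}(ii) exhibits precisely a preorder with an injective monotone but no utility function, so on that space no singleton family is a multi-utility. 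Hence the direction ``injective monotone $\Rightarrow$ injective monotone multi-utility'' is the substantive one, and your proposal never proves it.

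Conversely, the direction you spend all your effort on---manufacturing a single injective monotone from a possibly uncountable family of them---requires no aggregation at all: by definition an injective monotone multi-utility consists of injective monotones, so any one member of the family already witnesses existence. (Your worry that injective monotones ``separate comparable points, not incomparable ones'' is also backwards: injectivity up to equivalence means that non-equivalent points, in particular incomparable ones, always receive distinct values.) What the paper actually does for the substantive direction is the following: normalize the injective monotone to $c: X \to (0,1)$, let $A_c$ be the set of points $x$ admitting an incomparable $y$ with $c(x) < c(y)$, and for each $x \in A_c$ adjoin the function $c_x := c + \chi_{i(x)}$, where $i(x) = \{y \in X \mid x \preceq y\}$. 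Then $\{c\} \cup \{c_x\}_{x \in A_c}$ is an injective monotone multi-utility, because $c_x(x) \geq 1 > c_x(y)$ for every $y \bowtie x$, which repairs exactly the failures of the biconditional that a single real-valued $c$ can never avoid. An argument of this kind is what is missing from your proposal.
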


\begin{proof}
Consider w.l.o.g. an injective monotone $c:X \rightarrow (0,1)$ and
\begin{equation}
\label{Ac}
    A_c:=\{ x \in X| \exists y \in X \text{ s.t. } x \bowtie y,\text{ } c(x) < c(y)\}, 
\end{equation}
i.e.~the part of $X$ that has incomparable elements $y$ with strictly larger values of $c$. For all $x \in A_c$, let $c_x \coloneqq c + \chi_{i(x)}$. Notice, by construction $c_{x}(y) = c(y) < 1 \leq c_{x}(x)$ for all $y\in X$ with $x\bowtie y$. By using $c(X)\subseteq (0,1)$ and the fact that $c$ is an injective monotone, it is straightforward to see that $C\coloneqq \{c\}\cup \{c_x\}_{x\in A_c}$ is an injective monotone multi-utility.
\end{proof}

Note that the injective monotone multi-utility in the proof of Proposition \ref{strict mono equi r-p repre} can be chosen to have cardinality of at most $\mathfrak{c}$, the cardinality of the continuum, because it is enough to have one $c_x$ per equivalence class $[x]\in X/\mathord{\sim}$, and, whenever an injective monotone exists, $|X/\mathord{\sim}| \leq \mathfrak{c}$.

The cardinality of multi-utilities plays an important role. In particular, special interest lies in preordered spaces with \emph{countable} multi-utilities. In practice, countable multi-utilities are often used to define preordered spaces. {\color{black}For example, the uncertainty preorder $\preceq_U$ is defined in \eqref{uncert rela} by a countable (finite) multi-utility.} Also, many applications in multicriteria optimization \cite{bevilacqua2018multiobjective,ehrgott2005multicriteria} rely on preordered spaces defined by countable multi-utilities. It turns out that for the existence of strict monotones, {\color{black}such as entropy for $\preceq_U$}, it is sufficient to have a countable multi-utility \cite[Section 4]{alcantud2016richter}. Here, we show that countable multi-utilities actually imply the existence of \emph{injective} monotones, which, due to Proposition \ref{monotones different}, improves upon \cite{alcantud2016richter}.

\begin{proposition}
\label{countable implies injective monotone}
If, for a given preordered space $(X,\preceq)$, there exists a countable multi-utility, then there exists an injective monotone.
\end{proposition}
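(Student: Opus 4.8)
The plan is to build an injective monotone \emph{directly} from the countable multi-utility, rather than to route through Proposition \ref{ R-P implies injective monotone}, whose elimination technique requires a strict monotone with \emph{countable} non-injective set, a property the natural strict monotone obtained from a countable multi-utility need not possess (recall that for $\preceq_U$ even Shannon entropy has an uncountable non-injective set, yet $\preceq_U$ has a finite multi-utility). The starting observation is that, for a countable multi-utility $\{v_n\}_{n\in\mathbb N}$, we have $x\sim y$ if and only if $v_n(x)=v_n(y)$ for every $n$ (immediate from the defining equivalences of a multi-utility), so the assignment $x\mapsto(v_n(x))_n$ is injective on $X/\mathord{\sim}$. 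The whole task is thus to compress this countable vector of monotones into a single real-valued function that is at the same time order preserving and injective up to equivalence.

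First I would normalize, replacing each $v_n$ by $\tilde v_n\coloneqq\phi\circ v_n$ for a fixed strictly increasing bijection $\phi:\mathbb R\to[0,1)$ (a rescaled arctangent, say). Since $\phi$ is strictly increasing, $\{\tilde v_n\}_{n\in\mathbb N}$ is still a multi-utility of monotones, now valued in $[0,1)$. Writing the canonical binary expansion $\tilde v_n(x)=\sum_{j\ge1}b_{n,j}(x)\,2^{-j}$ (the one not terminating in $1$s) and fixing a bijection $\sigma:\mathbb N\times\mathbb N\to\mathbb N$ that is increasing in its second argument for each fixed first argument (e.g.\ the diagonal enumeration), I would set
\[
 f(x)\coloneqq\sum_{n,j}\,2\,b_{n,j}(x)\,3^{-\sigma(n,j)}.
\]
This interleaves the binary digits of all the $\tilde v_n(x)$ into a single number lying in the ternary Cantor set, where the digit sequence can be recovered uniquely.

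Injectivity up to equivalence is then the easy half: since $f(x)$ lives in the Cantor set, $f(x)=f(y)$ forces all ternary digits, hence all bits $b_{n,j}$, hence all values $v_n(x)=v_n(y)$ to coincide, so $x\sim y$. For monotonicity I would argue by the first-differing-digit principle: if $x\preceq y$ and $x\not\sim y$, pick $(n_0,j_0)$ realizing the smallest $\sigma$-value among the bit positions at which $x$ and $y$ differ; the condition on $\sigma$ ensures $j_0$ is the \emph{first} bit at which $\tilde v_{n_0}(x)$ and $\tilde v_{n_0}(y)$ disagree, and since $\tilde v_{n_0}(x)\le\tilde v_{n_0}(y)$ the canonical expansions give $b_{n_0,j_0}(x)=0$ and $b_{n_0,j_0}(y)=1$; as the associated ternary digit ($0$ versus $2$) is the first at which $f(x)$ and $f(y)$ differ, we conclude $f(x)<f(y)$.

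The main obstacle is the bookkeeping around the non-unique binary expansions of dyadic rationals, which is exactly why the construction pushes everything into the Cantor set (digits from $\{0,2\}$ in base $3$) and why $\sigma$ must preserve the order of the bit index within each coordinate: together these two choices are what make both the first-differing-digit comparison and the recovery of the digit sequence valid. Everything else — that $\phi$ preserves the multi-utility property, that the displayed series converges, and that the first-differing-digit principle holds for $\{0,2\}$-ternary expansions — is routine.
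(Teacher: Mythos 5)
Your proof is correct, and it reaches the result by a genuinely different route than the paper, although both share the same underlying engine: compressing countably many $\{0,1\}$-valued functions into one real number through a geometric series with ratio strictly below $\tfrac{1}{2}$, compared lexicographically at the first differing digit (the paper's Lemma \ref{series}; your $\{0,2\}$-ternary Cantor encoding is exactly this with ratio $\tfrac{1}{3}$). The difference lies in \emph{which} $\{0,1\}$-valued functions are fed into the series. The paper uses indicators of the rational-threshold preimages $A_{m,q}=u_m^{-1}([q,\infty))$; these are \emph{increasing} sets, so the resulting sum is automatically a monotone, and all that remains is a separation property (Lemma \ref{lemma:sep}): whenever $\neg(y\preceq x)$, some $A_{m,q}$ contains $y$ but not $x$. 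You instead use the binary digits $b_{n,j}$ of the normalized utilities; the sets $\{x \mid b_{n,j}(x)=1\}$ are in general \emph{not} increasing, so monotonicity is no longer free, and you buy it back with the requirement that $\sigma$ be increasing in the digit index $j$, which collapses the global first-differing-ternary-digit comparison onto the first differing bit of a single coordinate $\tilde v_{n_0}$, where monotonicity of $\tilde v_{n_0}$ and the canonical-expansion convention apply; your handling of the non-unique expansions and of the strict tail estimate is sound. Each choice buys something. The paper's increasing-set formulation is reusable machinery: it directly yields the separating-family characterization of strict and injective monotones (Proposition \ref{set charac monotones}) and the upper-semicontinuity refinements of Appendix \ref{topology} (threshold sets are closed when each $u_m$ is upper semicontinuous), neither of which carries over to your construction, since digit sets are in general neither increasing nor closed. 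Your construction, on the other hand, is self-contained and informationally transparent: $f(x)$ determines every digit, hence every value $v_n(x)$, so it exhibits the injective monotone as an explicit faithful encoding of the entire multi-utility, at the price of the expansion bookkeeping that the paper's threshold sets avoid entirely.
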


This means that the class of preordered spaces where countable multi-utilities exist is contained in the class of preordered spaces where an injective monotone exists (cf. Figure \ref{fig:classification}). However, there exist preordered spaces with injective monotones, i.e., by Proposition \ref{strict mono equi r-p repre}, with injective monotone multi-utilities of cardinality $\mathfrak{c}$, but without countable multi-utilities (see Proposition \ref{inj mono no count mu}). 

{\color{black}
For the uncertainty preorder $\preceq_U$, which is defined in \eqref{uncert rela} through a finite multi-utility,  Proposition \ref{countable implies injective monotone} therefore guarantees the existence of injective monotones. Moreover, we can see a possible construction in \eqref{eq:constructInjectiveMonotone} below. }

By a slight adaptation of the proof of Proposition \ref{countable implies injective monotone}, we obtain the stronger

\begin{proposition}
\label{countable implies injective monotone multi}
For a given preordered space $(X,\preceq)$, there exists a countable multi-utility if and only if there exists a countable multi-utility only consisting of injective monotones.
\end{proposition}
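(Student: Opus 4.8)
The plan is to prove both directions of the equivalence, where the nontrivial direction is to upgrade an arbitrary countable multi-utility into a countable multi-utility consisting only of injective monotones. The reverse direction is trivial: a countable multi-utility of injective monotones is in particular a countable multi-utility. So the work is all in the forward direction, and I would organize it by reusing the machinery from Proposition \ref{countable implies injective monotone} together with the construction idea from Proposition \ref{strict mono equi r-p repre}.

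First I would start from a countable multi-utility $V=\{v_n\}_{n\geq 0}$, which we may assume satisfies $v_n(X)\subseteq(0,1)$ after composing each $v_n$ with a strictly increasing bijection $\mathbb{R}\to(0,1)$ (this preserves the multi-utility property since such bijections are order-isomorphisms). By Proposition \ref{countable implies injective monotone}, the existence of a countable multi-utility already yields an injective monotone; in fact the proof of that proposition presumably constructs one explicitly, e.g.\ via a weighted sum such as
\begin{equation*}
    c(x)\coloneqq \sum_{n\geq 0} 2^{-(n+1)}\, v_n(x)\,,
\end{equation*}
which is a monotone (being a convergent nonnegative combination of monotones) and is injective up to equivalence because $c(x)=c(y)$ forces $v_n(x)=v_n(y)$ for all $n$ (the $v_n(x),v_n(y)$ lie in $(0,1)$ and the dyadic weights separate them), hence $x\sim y$ since $V$ is a multi-utility. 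The key step is then to attach to each original $v_n$ this injective monotone in a way that makes the resulting family both (a) still a multi-utility and (b) consist of injective monotones.

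Now I would define, for each $n$, a perturbed function $w_n\coloneqq v_n + \varepsilon\, c$ for a fixed small $\varepsilon>0$ (or, to stay inside a bounded range, a normalized convex combination). The point is that each $w_n$ is injective up to equivalence: if $w_n(x)=w_n(y)$ I want to conclude $x\sim y$, and here I would need to argue carefully, since $v_n$ alone need not be injective. This is the step I expect to be the main obstacle, because merely adding $c$ to a single $v_n$ does not obviously force injectivity unless $c$ itself already separates the classes, which it does; so the cleaner route is to let $C=\{c\}\cup\{w_n\}_{n\geq 0}$ where the single injective monotone $c$ guarantees injectivity of the whole family in the sense required, and then replace each $w_n$ if necessary so that it is individually injective. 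The most robust fix is to take $C\coloneqq \{\, c + v_n \,\}_{n\geq 0}$: each $c+v_n$ is a monotone, and it is injective up to equivalence because $(c+v_n)(x)=(c+v_n)(y)$ together with the fact that $c$ separates classes forces, upon combining with the multi-utility property of $V$, that $x\sim y$. Concretely, if all the $(c+v_n)(x)=(c+v_n)(y)$ held one could not immediately read off $c(x)=c(y)$ from a single index, so I would instead argue at the level of the family.

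Therefore the safe construction I would commit to is $C\coloneqq \{c\}\cup\{v_n + c\}_{n\geq 0}$, a countable family. I would verify it is a multi-utility: if $x\preceq y$ then every member, being a sum of monotones, increases, giving one direction; conversely if every member of $C$ satisfies $(\,\cdot\,)(x)\leq(\,\cdot\,)(y)$, then in particular $c(x)\leq c(y)$ and $v_n(x)+c(x)\leq v_n(y)+c(y)$ for every $n$, which need not immediately give $v_n(x)\leq v_n(y)$; this is exactly the subtlety, and to resolve it I would exploit that $c$ is injective up to equivalence so that the only way membership inequalities can fail to transfer is on equivalent pairs, which is harmless. Finally, each $c+v_n$ is an injective monotone since $c$ is, because $(c+v_n)(x)=(c+v_n)(y)$ with $c$ monotone and $v_n$ monotone forces $c(x)=c(y)$ (as neither summand can compensate the other without violating monotonicity of the other coordinate), whence $x\sim y$. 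Assembling these observations, $C$ is a countable multi-utility of injective monotones, completing the nontrivial direction; the converse is immediate, which finishes the proof.
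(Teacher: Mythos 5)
Your converse direction is fine, and invoking Proposition \ref{countable implies injective monotone} as a black box to obtain an injective monotone $c$ is legitimate. However, your explicit guess for $c$ is already wrong: for $c=\sum_{n\geq 0}2^{-(n+1)}v_n$ with $v_n(X)\subseteq(0,1)$, the equality $c(x)=c(y)$ does \emph{not} force $v_n(x)=v_n(y)$ for all $n$. Dyadic weights separate $\{0,1\}$-valued sequences (and even then only with ratio strictly below $\tfrac{1}{2}$, which is the whole point of Lemma \ref{series}), not sequences of reals in $(0,1)$: for instance $\tfrac{1}{2}(0.5)+\tfrac{1}{4}(0.5)=\tfrac{1}{2}(0.4)+\tfrac{1}{4}(0.7)$. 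This is exactly why the paper builds its monotones from characteristic functions of the increasing sets $u_m^{-1}([q,\infty))$ with ratio $r\in(0,\tfrac{1}{2})$, rather than from the multi-utility members themselves. This first gap would be repairable; the second one is not.

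The fatal gap is the main construction $C=\{c\}\cup\{v_n+c\}_{n\geq 0}$: it neither consists of injective monotones nor is a multi-utility, because monotonicity places no constraint on values at \emph{incomparable} pairs, so the two summands can exactly compensate each other there. Concretely, take $X=\{x,y\}$ with $x\bowtie y$ and the multi-utility $v_0(x)=0.6$, $v_0(y)=0.5$, $v_1(x)=0.5$, $v_1(y)=0.6$. With the injective monotone $c(x)=0.4$, $c(y)=0.3$ one gets $(v_1+c)(x)=(v_1+c)(y)=0.9$, so $v_1+c$ is not injective up to equivalence; with $c(x)=0.3$, $c(y)=0.5$ every member of $C$ weakly increases from $x$ to $y$ although $\neg(x\preceq y)$, so $C$ is not a multi-utility. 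Both of your justifications (``neither summand can compensate the other without violating monotonicity'' and ``inequalities can only fail to transfer on equivalent pairs'') implicitly assume that monotonicity constrains values at incomparable points, which it does not. The obstruction is essential rather than cosmetic: making $v_n$ the primary criterion and $c$ a tiebreaker inside a single real-valued function would amount to embedding a lexicographic order into $(\mathbb{R},\leq)$, which is impossible in general; the paper itself warns that, unlike the strict-monotone case of \cite{alcantud2016richter}, one cannot modify each member of the given multi-utility separately. Its actual proof is genuinely indirect: it takes the separating family $(A_n)_{n\geq 0}$ of increasing $\{0,1\}$-valued sets underlying $c$, and for each pair $m<p$ forms a new injective monotone $c_{m,p}$ from the \emph{same} sets reordered by the transposition of $m$ and $p$; Lemma \ref{series} and Lemma \ref{lemma:sep} then give both the injectivity of every $c_{m,p}$ and the separation property \eqref{charac:multi-util} for $\{c\}\cup\{c_{m,p}\}_{m<p}$, with no member of the new family tied to any single $v_n$.
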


This improves upon \cite[Proposition 4.1]{alcantud2016richter}, where it is shown that a countable multi-utility exists if and only if a countable strict monotone multi-utility exists. Notice, however, while for the proof in \cite{alcantud2016richter}, one can simply modify each member of a given multi-utility separately---similarly as we did for the construction in Proposition \ref{strict mono equi r-p repre}---our proof of Proposition \ref{countable implies injective monotone multi} relies on a more indirect technique, where each member of the resulting injective monotone multi-utility does not have a direct relationship to a non-injective member of the given multi-utility.

For the proofs of Propositions \ref{countable implies injective monotone} and \ref{countable implies injective monotone multi} we rely on the following basic facts, the proofs of which can be found in the appendix. 

\begin{lemma} \label{series} Let $X$ be a set. Given $r\in (0, \frac{1}{2})$ and a countable family $(A_n)_{n\geq 0}$ of subsets $A_n \subseteq X$, define the function $c: X\to \mathbb R$ by
\begin{equation} \label{eq:constructInjectiveMonotone}
c(x):= \sum_{n\geq 0} r^{n} \chi_{A_n}(x) \, .
\end{equation}
Then, $c(x)<c(y)$ if and only if, for the first $m\in\mathbb N$ with $\chi_{A_m}(x) \not = \chi_{A_m}(y)$, we have $\chi_{A_m}(x) < \chi_{A_m}(y)$.
\end{lemma}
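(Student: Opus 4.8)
The plan is to treat $c(x)$ and $c(y)$ as real numbers whose comparison is governed by a lexicographic rule on their ``digit sequences'' $(\chi_{A_n}(x))_{n\geq 0}$ and $(\chi_{A_n}(y))_{n\geq 0}$, exactly as in a base-$\tfrac1r$ expansion with digits restricted to $\{0,1\}$. First I would record that each series converges absolutely: since $0\leq \chi_{A_n}(\cdot)\leq 1$, we have $0\leq c(x)\leq \sum_{n\geq 0} r^n = (1-r)^{-1}<\infty$, so all the rearrangements below are legitimate. Writing $\epsilon_n := \chi_{A_n}(x)-\chi_{A_n}(y)\in\{-1,0,1\}$, the object to analyze is the difference $c(x)-c(y)=\sum_{n\geq 0} r^n \epsilon_n$.

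The key step is a tail estimate. If the two digit sequences agree everywhere, then every $\epsilon_n=0$ and $c(x)=c(y)$; so assume there is a first index $m$ with $\epsilon_m\neq 0$, i.e.\ $\epsilon_n=0$ for $n<m$ and $\epsilon_m\in\{-1,+1\}$. Then $c(x)-c(y)=r^m\epsilon_m+\sum_{n>m} r^n\epsilon_n$, and the tail obeys $\left|\sum_{n>m} r^n\epsilon_n\right|\leq \sum_{n>m} r^n = \frac{r^{m+1}}{1-r}$. This is where the hypothesis $r<\frac12$ does all the work: it yields $1-r>r$, hence $\frac{r}{1-r}<1$ and therefore $\frac{r^{m+1}}{1-r}<r^m$. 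Thus the tail has strictly smaller magnitude than the leading term $r^m$, so the sign of $c(x)-c(y)$ coincides with the sign of $\epsilon_m$.

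Combining these observations yields a trichotomy: $c(x)-c(y)$ is negative, zero, or positive according to whether $\epsilon_m=-1$, there is no differing index, or $\epsilon_m=+1$. Since $\epsilon_m=-1$ is exactly the assertion $\chi_{A_m}(x)<\chi_{A_m}(y)$ at the first differing index, both implications of the stated equivalence follow at once. For the ``if'' direction, $\chi_{A_m}(x)<\chi_{A_m}(y)$ gives $\epsilon_m=-1$ and hence $c(x)<c(y)$; for the ``only if'' direction, $c(x)<c(y)$ rules out both $\epsilon_m=+1$ (which would force $c(x)>c(y)$) and the absence of a differing index (which would force $c(x)=c(y)$), leaving $\chi_{A_m}(x)<\chi_{A_m}(y)$.

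There is no deep obstacle here; the single point I would be careful to state explicitly is the strict inequality $\frac{r^{m+1}}{1-r}<r^m$. It is precisely the requirement $r<\frac12$, rather than merely $r<1$, that prevents the accumulated tail from overturning a single leading digit, and this is exactly what makes the comparison genuinely lexicographic.
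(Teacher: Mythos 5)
Your proof is correct and follows essentially the same route as the paper's: both rest on the geometric-series tail estimate $\sum_{n>m} r^n = \tfrac{r}{1-r}\,r^m < r^m$ (which is precisely equivalent to $r<\tfrac12$), so that the first differing index dominates, and both obtain the ``only if'' direction by the same symmetry/trichotomy argument. Your packaging via $\epsilon_n = \chi_{A_n}(x)-\chi_{A_n}(y)$ and the sign of $c(x)-c(y)$ is only a cosmetic variant of the paper's two one-sided estimates on $c(x)$ and $c(y)$.
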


The following characterizations of injective monotones and countable multi-utilities follow by straightforward manipulations of their definitions.

\begin{lemma} \label{basic characterizations} Let $(X,\preceq)$ be a preordered space. A monotone $u$ is an injective monotone if and only if 
\begin{equation}
x\prec y \ \Rightarrow \ u(x)<u(y) \ \text{ and } \ x\bowtie y \ \Rightarrow \ u(x)\not = u(y) \, .
\end{equation} 
A collection $U$ of monotones is a multi-utility if and only if 
\begin{equation}\label{charac:multi-util}
\neg (y\preceq x) \ \Rightarrow  \ \exists u\in U \text{ s.t. } u(x)< u(y) \, .
\end{equation}
\end{lemma}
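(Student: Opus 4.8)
The plan is to treat the two characterizations separately; both reduce to unwinding the relevant definitions together with a complete case analysis of how two points can be related under a preorder. For any $x,y\in X$, exactly one of the four alternatives $x\sim y$, $x\prec y$, $y\prec x$, $x\bowtie y$ holds, and I would exploit this exhaustive and mutually exclusive split throughout.

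For the first equivalence, I would start from the definition of an injective monotone as a monotone $u$ with $u(x)=u(y)\Rightarrow x\sim y$. For the forward implication, assume $u$ is an injective monotone. If $x\prec y$ then $x\preceq y$, so monotonicity gives $u(x)\leq u(y)$; were $u(x)=u(y)$, injectivity would force $x\sim y$, contradicting $x\prec y$, hence $u(x)<u(y)$. If instead $x\bowtie y$, then $u(x)=u(y)$ would again yield $x\sim y$ by injectivity, contradicting incomparability, so $u(x)\neq u(y)$. For the converse, assume the displayed conditions and suppose $u(x)=u(y)$; I would rule out the three non-equivalent alternatives using the hypotheses (the cases $x\prec y$ and $y\prec x$ each contradict the strict-inequality clause applied with the appropriate ordering of the pair, and $x\bowtie y$ contradicts the $\neq$ clause), leaving only $x\sim y$, which is exactly injectivity.

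For the second equivalence, the key observation is that since $U$ already consists of monotones, the forward half of the multi-utility condition, $x\preceq y\Rightarrow u(x)\leq u(y)$ for all $u\in U$, holds automatically; hence $U$ is a multi-utility precisely when its converse holds, namely $(\forall u\in U:\,u(x)\leq u(y))\Rightarrow x\preceq y$. I would then take the contrapositive of this converse, obtaining that $\neg(x\preceq y)$ implies there is some $u\in U$ with $u(x)>u(y)$, and relabel the universally quantified pair by swapping $x$ and $y$ to arrive at the stated form: $\neg(y\preceq x)$ implies there is some $u\in U$ with $u(x)<u(y)$.

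The statements are genuinely elementary, so there is no deep obstacle; the only points requiring care are bookkeeping ones. First, I must be sure the four-way case split is truly exhaustive and that I invoke the strict-monotonicity clause in the correct orientation when handling $y\prec x$ (the clause is universally quantified over the pair, so it applies with the roles reversed). Second, in the multi-utility part I must track the variable renaming in the contrapositive carefully, so that the inequality lands as $u(x)<u(y)$ under the hypothesis $\neg(y\preceq x)$ rather than as the transposed version.
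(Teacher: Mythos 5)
Your proof is correct: both the four-way case analysis for the injective-monotone characterization and the contrapositive-plus-relabeling argument for the multi-utility characterization are sound, with the bookkeeping (orientation of $\prec$ and the variable swap) handled properly. The paper offers no explicit proof, stating only that the lemma ``follows by straightforward manipulations of the definitions,'' and your argument is exactly that routine unwinding, so it matches the paper's intended approach.
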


A subset $A\subseteq X$ of a preordered space $(X,\preceq)$ is called \emph{decreasing} if for all $x\in A$, $y\preceq x$ implies $y\in A$. Analogously, a subset $A\subseteq X$ is called \emph{increasing}, if for all $x\in A$, $x\preceq y$ implies that $y\in A$ \cite{mehta1986existence}. We say a family $(A_n)_{n\in\mathbb N}$ of subsets $A_n\subseteq X$ \emph{separates $x$ from $y$}, if there exists $n\in \mathbb N$ with $x\not\in A_n$ and $y\in A_n$. 

\begin{lemma}\label{lemma:sep}
Let $(A_n)_{n\geq 0}$ be a family of increasing sets.
\begin{enumerate}[label=(\roman*)]
\item If, for all $x,y\in X$ with $x \prec y$, $(A_n)_{n\geq 0}$ separates $x$ from $y$, then the function
$c: X\to \mathbb R$ defined in \eqref{eq:constructInjectiveMonotone} is a strict monotone for all $r\in (0,1)$.
\item If in addition, for all $x,y\in X$ with $x \bowtie y$, $(A_n)_{n\geq 0}$ separates $x$ from $y$, or
$y$ from x, then $c$ is an injective monotone for all $r\in (0,\frac{1}{2})$.
\end{enumerate}
\end{lemma}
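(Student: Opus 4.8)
The plan is to prove the two parts of Lemma \ref{lemma:sep} by combining the characterization of order via the function $c$ from Lemma \ref{series} with the characterizations of strict and injective monotones from Lemma \ref{basic characterizations}. The central observation is that when the sets $A_n$ are all increasing, the condition ``$(A_n)$ separates $x$ from $y$'' is exactly the combinatorial fact that lets Lemma \ref{series} detect a strict inequality $c(x) < c(y)$.

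For part $(i)$, I would first verify that $c$ is a monotone for any $r \in (0,1)$. If $x \preceq y$, then for every increasing set $A_n$, $x \in A_n$ implies $y \in A_n$, so $\chi_{A_n}(x) \leq \chi_{A_n}(y)$ for all $n$; summing with nonnegative coefficients $r^n$ gives $c(x) \leq c(y)$. Next, to show $c$ is a \emph{strict} monotone, suppose $x \prec y$. By hypothesis $(A_n)$ separates $x$ from $y$, so there is some index with $x \notin A_n$ and $y \in A_n$, i.e. $\chi_{A_n}(x) < \chi_{A_n}(y)$. Because all the $A_n$ are increasing and $x \preceq y$, at \emph{every} index we have $\chi_{A_m}(x) \leq \chi_{A_m}(y)$; in particular, at the \emph{first} index $m$ where the two characteristic values differ, the inequality must go the right way, $\chi_{A_m}(x) < \chi_{A_m}(y)$. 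Lemma \ref{series} then yields $c(x) < c(y)$, so by Lemma \ref{basic characterizations} (the strict-monotone part) $c$ is a strict monotone.

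For part $(ii)$, I would reuse the work from part $(i)$: the additional separation hypothesis for incomparable pairs gives the second clause in the injective-monotone characterization of Lemma \ref{basic characterizations}. Specifically, take $x \bowtie y$. By assumption $(A_n)$ separates one of them from the other, say $x$ from $y$, so there is an index with $\chi_{A_n}(x) \neq \chi_{A_n}(y)$. Hence there is a \emph{first} such index $m$, and Lemma \ref{series} shows that $c(x) \neq c(y)$ (the direction of the inequality is irrelevant here; we only need distinctness). Combined with the strict clause already established in $(i)$, this verifies both implications of Lemma \ref{basic characterizations}, so $c$ is an injective monotone. The restriction to $r \in (0,\tfrac12)$ is inherited directly from the hypothesis of Lemma \ref{series}, which is what allows us to read off strict inequalities and distinctness from the first differing index.

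The only subtle step is the observation in part $(i)$ that at the first differing index the inequality has the correct sign: a priori, separation only guarantees \emph{some} index where $y$ is ``above'' $x$, but the first differing index could in principle have gone the other way. This is precisely where increasingness of the $A_n$ together with $x \preceq y$ is essential, since it forces $\chi_{A_m}(x) \leq \chi_{A_m}(y)$ simultaneously at all indices, ruling out a reversal at the first difference. For the incomparable case in $(ii)$ this subtlety disappears, since we only need $c(x) \neq c(y)$ and not a specified sign. I expect no genuine obstacle beyond keeping this sign bookkeeping straight.
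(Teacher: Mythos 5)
Your proposal follows essentially the same route as the paper's own proof: monotonicity of $c$ from the increasingness of the sets $A_n$, strictness from the separation of $x \prec y$ pairs, distinctness of values on incomparable pairs from the additional separation hypothesis, all combined through Lemma \ref{series} and the characterization in Lemma \ref{basic characterizations}. Your handling of the sign at the first differing index is, if anything, more explicit than the paper's.

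One point does need repair, however. Part $(i)$ asserts that $c$ is a strict monotone for \emph{all} $r \in (0,1)$, but Lemma \ref{series} is stated (and is only true) for $r \in (0,\tfrac{1}{2})$: when $r \geq \tfrac{1}{2}$ the tail $\sum_{n>m} r^n$ can match or exceed $r^m$, so the first differing index no longer controls the order of $c(x)$ and $c(y)$. Consequently, your derivation of $c(x)<c(y)$ via Lemma \ref{series}---together with your closing remark that the restriction $r \in (0,\tfrac{1}{2})$ is ``inherited'' by the whole construction---leaves part $(i)$ unproven on $[\tfrac{1}{2},1)$. The fix is already contained in your own observation: since increasingness and $x \preceq y$ give $\chi_{A_n}(x) \leq \chi_{A_n}(y)$ at \emph{every} index, while the separating index $n_0$ gives a strict inequality, one has
\begin{equation*}
c(y) - c(x) \;=\; \sum_{n \geq 0} r^n \bigl(\chi_{A_n}(y) - \chi_{A_n}(x)\bigr) \;\geq\; r^{n_0} \;>\; 0 ,
\end{equation*}
a sum of non-negative terms with at least one positive term, valid for every $r \in (0,1)$ and requiring no appeal to Lemma \ref{series}. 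Lemma \ref{series} is genuinely needed only in part $(ii)$, where for incomparable pairs the differences $\chi_{A_n}(y)-\chi_{A_n}(x)$ can have mixed signs; there the restriction $r \in (0,\tfrac{1}{2})$ is exactly what is used, and your argument for that part is correct as written.
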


Notice, the construction of strict and injective monotones in Lemma \ref{lemma:sep} is based on Lemma \ref{series} and is analogous to constructions that appear in the literature, where one typically uses a value of $r=\frac{1}{2}$ (e.g.~\cite{mehta1977topological,ok2002utility,alcantud2016richter}). The requirement of $r<\frac{1}{2}$ in Lemma \ref{series} and \ref{lemma:sep} ensures that the resulting monotone is injective. In fact, as can be seen from the proof of Lemma \ref{series} in the appendix, for $r \in (0,1)$ we have $r^m = \frac{r}{1-r} \sum_{n>m} r^n$. A value of $r\in (0,\frac{1}{2})$ thus enables the strict estimate $r^m > \sum_{n>m} r^n$, which is exactly where the injectivity up to equivalence of $c$ in Lemma \ref{lemma:sep} rests.

\begin{proof}[Proof of Proposition \ref{countable implies injective monotone}]
For a countable multi-utility $(u_m)_{m\in M}$ and $q\in\mathbb Q$, consider the increasing sets $A_{m,q} \coloneqq u_m^{-1}([q,\infty))$. It suffices to show that $(A_n)_{n\geq 0}$, where $A_n \coloneqq A_{m_n,q_n}$ for some enumeration $n\mapsto (m_n,q_n)$ of $M\times \mathbb Q$, satisfies $(i)$ and $(ii)$ in Lemma \ref{lemma:sep}. If $x\prec y$ or $x\bowtie y$, then, by \eqref{charac:multi-util}, in both cases there exists $m\in M$ with $u_m(x)<u_m(y)$. Hence, we can choose $q\in \mathbb Q$ with $u_m(x)<q<u_m(y)$, in particular, $x\not\in A_{m,q}$ and $y\in A_{m,q}$.
\end{proof}

\begin{proof} [Proof of Proposition \ref{countable implies injective monotone multi}]
Let $(u_m)_{m\in M}$ be a countable multi-utility and let $c$ be an injective monotone of the form \eqref{eq:constructInjectiveMonotone} constructed from the increasing sets $A_n$ in the proof of Proposition \ref{countable implies injective monotone}. We define, for any pair $(m,p) \in \mathbb{N}$ such that $m<p$, $\varphi_{m,p}: \mathbb{N} \to \mathbb{N}$ which permutes $m$ and $p$ without changing any other natural number. For each $\varphi_{m,p}$, we define an injective monotone $c_{m,p}$ of the form \eqref{eq:constructInjectiveMonotone} constructed from $(A_{\varphi_{m,p}(n)})_{n\geq 0}$, the increasing sets used to define $c$ reordered by $\varphi_{m,p}$. Since $\{c\}\cup\{c_{m,p}\}_{(m,p) \in \mathbb{N}^2, \text{ } m<p}$ is composed of injective monotones, it suffices to show \eqref{charac:multi-util} holds to conclude there exists a countable multi-utility composed of injective monotones. Consider, thus, $x,y \in X$ such that $\neg (y\preceq x)$. If $x \prec y$, then $c(x)<c(y)$ by definition. Assume now $x \bowtie y$. If $c(x)<c(y)$, then we have finished. Otherwise, we have $x \in A_m$ and $y \not \in A_m$ for the first $m \in \mathbb{N}$ such that $\chi_{A_m}(x) \not = \chi_{A_m}(y)$ by Lemma \ref{series}. Since there exists some $p \in \mathbb{N}$ $p>m$ such that $y \in A_p$ and $x \not \in A_p$, the first $n \in \mathbb{N}$ such that $\chi_{A_{\varphi_{m,p}}(n)}(x) \not = \chi_{A_{\varphi_{m,p}(n)}}(y)$ is $n=m$. We conclude $c_{m,p}(x)<c_{m,p}(y)$ by Lemma \ref{series}, since we have $\chi_{A_{\varphi_{m,p}}(m)}(x)  = \chi_{A_p}(x) < \chi_{A_p}(y) = \chi_{A_{\varphi_{m,p}(m)}}(y)$.
\end{proof}

Countable separating families such as the ones in Lemma \ref{lemma:sep} have been used to characterize preordered spaces with continuous utility functions \cite{herden1989existence}, generalizing theorems of Peleg and Mehta \cite{mehta1981recent}. In a similar spirit, Alcantud et al.~\cite{alcantud2013representations} extend a result by Bosi and Zuanon \cite{bosi2013existence} about upper semicontinuous multi-utilities based on separating families, showing that \emph{there exists a countable multi-utility if and only if there exists a family of decreasing subsets that $\forall x,y\in X$ with $\neg (y \preceq x)$ separates $x$ from $y$} \cite[Proposition 2.13]{alcantud2013representations}. Using Lemma \ref{lemma:sep} and the characterizations in Lemma \ref{basic characterizations}, we immediately get the following analogous characterizations for preorders with strict and injective monotones, the proofs of which can be found in Appendix \ref{proof prop 7}.

\begin{proposition}
\label{set charac monotones}
Let $(X,\preceq)$ be a preordered space.
\begin{enumerate}[label=(\roman*)]
 \item There exists a strict monotone if and only if there exists a family of increasing subsets that $\forall x,y\in X$ with $x \prec y$ separates $x$ from $y$.
 \item There exists an injective monotone if and only if there exists a family of increasing subsets that satisfies (i) and (ii) in Lemma \ref{lemma:sep}.
\end{enumerate}
\end{proposition}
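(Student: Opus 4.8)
The plan is to prove both biconditionals by combining the construction in Lemma \ref{lemma:sep} with the level-set construction already used in the proof of Proposition \ref{countable implies injective monotone}. Each statement splits into an ``if'' and an ``only if'' direction, and in both cases one direction is handed to us almost verbatim by Lemma \ref{lemma:sep}. Throughout, ``family'' is understood as a countable family $(A_n)_{n\geq 0}$, in line with the definition of separation and with Lemma \ref{series}.

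For the ``if'' directions I would argue as follows. In part (i), given a family $(A_n)_{n\geq 0}$ of increasing subsets that separates $x$ from $y$ whenever $x\prec y$, Lemma \ref{lemma:sep}(i) states that the function $c$ defined in \eqref{eq:constructInjectiveMonotone} (for any $r\in(0,1)$) is a strict monotone, so a strict monotone exists. For part (ii), given a family satisfying both (i) and (ii) of Lemma \ref{lemma:sep}, Lemma \ref{lemma:sep}(ii) says that $c$ (now taken with $r\in(0,\tfrac12)$) is an injective monotone. Nothing further is needed.

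For the ``only if'' directions I would start from a strict (resp.\ injective) monotone $u$ and build the separating family explicitly as the superlevel sets indexed by the rationals: set $A_q:=u^{-1}([q,\infty))$ for $q\in\mathbb Q$, enumerated as a countable family $(A_n)_{n\geq 0}$. Each $A_q$ is increasing because $u$ is a monotone: if $x\in A_q$ and $x\preceq y$, then $u(y)\geq u(x)\geq q$. To check the separation property in part (i), take $x\prec y$; since $u$ is a strict monotone, $u(x)<u(y)$, so by density of $\mathbb Q$ there is $q$ with $u(x)<q\leq u(y)$, whence $x\notin A_q$ and $y\in A_q$. For part (ii) I additionally handle incomparable pairs: if $x\bowtie y$, then the characterization of injective monotones in Lemma \ref{basic characterizations} gives $u(x)\neq u(y)$, so, after possibly swapping the roles of $x$ and $y$, the same density argument yields a $q$ with $x\notin A_q$ and $y\in A_q$; that is, the family separates $x$ from $y$ or $y$ from $x$. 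This verifies conditions (i) and (ii) of Lemma \ref{lemma:sep}.

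I expect no serious obstacle, since the argument is a direct packaging of earlier results; the only points needing care are (a) keeping the family countable, which is why the thresholds are taken over $\mathbb Q$ rather than $\mathbb R$, and (b) invoking the second clause of the injective-monotone characterization in Lemma \ref{basic characterizations} to treat incomparable pairs, which is precisely what upgrades the part-(i) argument to part (ii).
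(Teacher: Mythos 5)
Your proposal is correct and follows essentially the same route as the paper: the ``if'' directions are delegated to Lemma \ref{lemma:sep}, and the ``only if'' directions use exactly the paper's construction of rational superlevel sets $u^{-1}([q,\infty))$, $q\in\mathbb Q$, with density of the rationals giving separation for $x\prec y$ and, via $u(x)\neq u(y)$ for injective monotones, for incomparable pairs as well. Your additional explicit checks (that the superlevel sets are increasing, and the appeal to Lemma \ref{basic characterizations}) are details the paper leaves implicit, but the argument is the same.
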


 Countable separating families are a useful tool to improve the classification of preordered spaces by monotones. In particular, we use them in Propostion \ref{inj mono no count mu} to show the converse of Proposition \ref{countable implies injective monotone} is false, that is, there are preordered spaces where injective monotones exist and countable multi-utilities do not.

\begin{proposition}
\label{inj mono no count mu}
There are preordered spaces with injective monotones and without countable multi-utilities.
\end{proposition}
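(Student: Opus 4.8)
The plan is to exhibit a single explicit preordered space that admits an injective monotone yet no countable multi-utility, namely the \emph{standard example} of uncountable order dimension. I would take $X \coloneqq \{a_r : r\in\mathbb R\}\cup\{b_r : r\in\mathbb R\}$, a disjoint union of two copies of $\mathbb R$, and define $\preceq$ by reflexivity together with $a_r\prec b_s \iff r\neq s$, with no other comparabilities. One checks at once that $\preceq$ is a partial order in which the $a_r$ are minimal, the $b_s$ are maximal, and the only ``diagonal'' incomparabilities are the pairs $a_r\bowtie b_r$.

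First I would verify the existence of an injective monotone by writing one down directly. Fix injections $\sigma,\tau:\mathbb R\to(0,1)$ and set $f(a_r)\coloneqq\sigma(r)$ and $f(b_r)\coloneqq 1+\tau(r)$. The $a$-values lie in $(0,1)$ and the $b$-values in $(1,2)$, so $f$ is injective; and the only comparabilities are $a_r\prec b_s$ with $r\neq s$, for which $f(a_r)<1<f(b_s)$, so $f$ is a monotone. Since $\sim$ is equality on $X$, a monotone is an injective monotone precisely when it is injective, so $f$ is an injective monotone (and the cardinality bound $|X/\mathord{\sim}|\le\mathfrak c$ holds automatically).

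The main work is to show that no countable multi-utility exists, and for this I would argue by contradiction using the characterization \eqref{charac:multi-util} from Lemma \ref{basic characterizations}. Suppose $(u_m)_{m\in\mathbb N}$ is a countable multi-utility. For each $r$, the incomparability $a_r\bowtie b_r$ yields in particular $\neg(a_r\preceq b_r)$, so by \eqref{charac:multi-util} there is an index $m(r)$ with $u_{m(r)}(b_r)<u_{m(r)}(a_r)$. As $\mathbb R$ is uncountable and the indices range over $\mathbb N$, by pigeonhole there is a fixed $m_*$ and an uncountable $R\subseteq\mathbb R$ with $u_{m_*}(b_r)<u_{m_*}(a_r)$ for all $r\in R$. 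Picking distinct $r,r'\in R$, monotonicity applied to $a_r\prec b_{r'}$ (valid since $r\neq r'$) gives $u_{m_*}(a_r)\le u_{m_*}(b_{r'})<u_{m_*}(a_{r'})$, and symmetrically $u_{m_*}(a_{r'})<u_{m_*}(a_r)$, a contradiction.

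The step I expect to be delicate is this last one: one must feed the right incomparable pairs into \eqref{charac:multi-util} (the diagonal pairs $a_r\bowtie b_r$, taken in the direction giving $u_{m_*}(b_r)<u_{m_*}(a_r)$) and then exploit the off-diagonal comparabilities $a_r\prec b_{r'}$ to collapse the values into a contradiction. The pigeonhole on the uncountable index set is exactly what converts ``uncountably many independent reversals'' into a single function forced to rank the $a_r$ inconsistently. An alternative route would invoke the cited decreasing-set characterization of countable multi-utilities, or observe that this example has uncountable order dimension and hence does not embed into $\mathbb R^{\mathbb N}$, but the direct argument via \eqref{charac:multi-util} is the most self-contained.
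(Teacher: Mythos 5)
Your proof is correct, and your space is essentially the paper's own example in disguise: the paper takes $X=[0,1]\cup[2,3]$ with $x\preceq y$ iff $x=y$, or $x\in[0,1]$, $y\in[2,3]$ and $y\neq x+2$, which is exactly your two-level order (uncountably many minimal elements, uncountably many maximal ones, every cross pair comparable except the matched ``diagonal'' pairs), and the paper's injective monotone is likewise the obvious injection into $\mathbb{R}$ (the identity map). Where you genuinely differ is the non-existence argument. The paper routes through separating families: it invokes the cited equivalence between countable multi-utilities and countable families of separating sets \cite{alcantud2013representations}, and then shows any family of increasing sets separating incomparable diagonal pairs must be uncountable, because the assignment $x\mapsto A_x$ (with $x\in A_x$, $x+2\notin A_x$) is injective---if $z\in A_x$ for $z\neq x$, then increasingness and $z\prec x+2$ would force $x+2\in A_x$. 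You instead argue directly on the multi-utility via the paper's own Lemma \ref{basic characterizations}: pigeonhole over the countably many indices hands you a single monotone $u_{m_*}$ that reverses two distinct diagonal pairs, and the off-diagonal comparabilities then force the cycle $u_{m_*}(a_r)<u_{m_*}(a_{r'})<u_{m_*}(a_r)$. The combinatorial core is the same in both proofs---distinct diagonal pairs cannot share a witness, whether the witness is an increasing set or a function---but your version is more self-contained, needing only the easy direction of \eqref{charac:multi-util} rather than the external characterization, while the paper's formulation establishes the slightly stronger fact that \emph{every} separating family of increasing sets is uncountable, which dovetails with its general machinery relating monotones to separating families (Proposition \ref{set charac monotones}). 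A small remark: your pigeonhole is stronger than needed; it suffices that some two distinct $r,r'$ receive the same index, which any map from an uncountable set to $\mathbb{N}$ must produce.
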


\begin{proof}
Consider $X \coloneqq [0,1] \cup [2,3]$ equipped with $\preceq$ where 
\begin{equation}
\label{order def 1}
        x \preceq y \iff 
    \begin{cases}
    x=y\\
    x  \in [0,1],\text{ } y \in [2,3] \text{ and } y \neq x+2
    \end{cases}
    \end{equation}
    $\forall x,y \in X$ (see Figure \ref{fig: inj mono and no count mu} for a representation of $\preceq$). Notice $(X,\preceq)$ is a preordered space and the identity map $i_d: X \to \mathbb{R}$ is an injective monotone. We will show any family $(A_i)_{i \in I}$, where $A_i \subseteq X$ is increasing $\forall i \in I$ and $\forall x,y \in X$ such that $\neg (y \preceq x)$ there exists some $i \in I$ such that $x \not \in A_i$ and $y \in A_i$, is uncountable. Since the existence of some $(A_i)_{i \in I}$ with those properties and countable $I$ is equivalent to the existence of a countable multi-utility \cite[Proposition 2.13]{alcantud2013representations}, we will get there is no countable multi-utility for $X$. Consider a family $(A_i)_{i \in I}$ with the properties above and, for each $x \in [0,1]$, $y_x \coloneqq x+2$. Since $x \bowtie y_x$ by definition, there exists some $A_x \in (A_i)_{i \in I}$ such that $x \in A_x$ and $y_x \not \in A_x$. We fix such an $A_x$ for each $x \in [0,1]$ and consider the map $f:[0,1] \to (A_i)_{i \in I}$, $x \mapsto A_x$. Given $x,z \in [0,1]$ $x \neq z$, if we assume $z \in A_x$, then, since $A_x$ is increasing and $z \prec y_x$ as $y_x \neq z+2$, we would have $y_x \in A_x$, a contradiction. Notice, analogously, we get a contradiction if we assume $x \in A_z$  and, therefore, $A_x \not = A_z$. Thus, $A_x=A_z$ implies $x=z$ and we have, by injectivity of $f$, $|[0,1]| \leq |(A_i)_{i \in I}|$. As a consequence, $X$ has no countable multi-utility.
\end{proof}

\begin{figure}[!tb]
\centering
\begin{tikzpicture}
\node[rounded corners, draw,fill=blue!20, text height = 2.5cm, minimum width = 11cm,xshift=4cm,label={[anchor=west,left=.1cm]180:\textbf{B}}] {};
\node[rounded corners, draw,fill=blue!20, text height = 2.5cm, minimum width = 11cm,xshift=4cm,yshift=-4cm,label={[anchor=west,left=.1cm]180:\textbf{A}}] {};
    \node[main node] (1) {$x+2$};
    \node[main node] (2) [right = 2cm  of 1]  {$y+2$};
    \node[main node] (3) [right = 2cm  of 2]  {$z+2$};
    \node[main node] (4) [below = 2cm  of 1] {$x$};
    \node[main node] (5) [right = 2cm  of 4] {$y$};
    \node[main node] (6) [right = 2cm  of 5] {$z$};

    \path[draw,thick,->]
    (4) edge node {} (2)
    (4) edge node {} (3)
    (5) edge node {} (1)
    (5) edge node {} (3)
    (6) edge node {} (1)
    (6) edge node {} (2)
    ;
\end{tikzpicture}
\caption{Representation of a preordered space, defined in Proposition \ref{inj mono no count mu}, where injective monotones exist and countable multi-utilities do not. In particular, we show $A \coloneqq [0,1]$, $B \coloneqq [2,3]$ and how $x,y,z \in A$, $x<y<z$, are related to $x+2,y+2,z+2 \in B$. Notice an arrow from an element $w$ to an element $t$ represents $w \prec t$.}
\label{fig: inj mono and no count mu}
\end{figure}
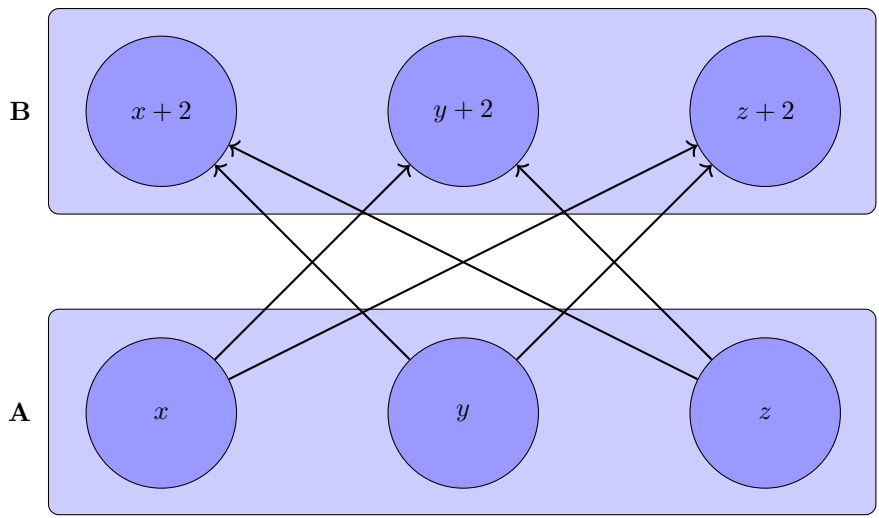

As we have seen in this section, the concept of separating families is closely related to the existence of monotones. In particular, this link is apparent when considering sets of the form $u^{-1}([q,\infty))$ for some monotone $u$ and $q\in \mathbb Q$, allowing to translate the two concepts into each other (see the proofs of Propositions \ref{countable implies injective monotone} and \ref{countable implies injective monotone multi}). There is another rich class of separability properties of preordered spaces providing necessary conditions for the existence of monotones, which could collectively be described by the term \emph{order separability}. Many important results from mathematical economics fall into this category, such as the Debreu Open Gap Lemma \cite{debreu1964continuity}, the Nachbin Separation Theorem \cite{nachbin1965topology}, Szpilrajn's theorem \cite{szpilrajn1930extension}, and Fishburn's theorem \cite[Theorem 3.1]{fishburn1970utility}. We discuss the role of injective monotones relative to order separability in the following section.

\section{Relating monotones to order separability} \label{monot and denseness}

\begin{table}[!t]
\caption{Separability properties of preordered spaces $(X,\preceq)$.}
\begin{center}
\begin{tabular}{ p{3cm} p{1.3cm} p{6.5cm} }
 \hline\noalign{\smallskip}
 Name& \hfil Object & \hfil Definition\\
 \hline\noalign{\smallskip}
order dense  & \hfil $Z \subseteq X$    & $\forall x,y \in X$ $x \prec y$ $\implies$ $\exists z \in Z$: $x \prec z \prec y$ \\
 Debreu dense&   \hfil $Z \subseteq X$  & $\forall x,y \in X$ $x \prec y$ $\implies$ $\exists z \in Z$: $x \preceq z \preceq y$  \\
 upper dense&   \hfil$Z \subseteq X$  & $\forall x,y \in X$ $x \bowtie y$ $\implies$ $\exists z \in Z$: $x \bowtie z \prec y$\\
 Debreu upper dense& \hfil$Z \subseteq X$  & $\forall x,y \in X$ $x \bowtie y$ $\implies$ $\exists z \in Z$: $x \bowtie z \preceq y$ \\
 order separable& \hfil $X$ & $\exists Z \subseteq X$ countable: $Z$ is order dense \\
 Debreu separable    & \hfil$X$ & $\exists Z \subseteq X$ countable: $Z$ is Debreu dense\\
 upper separable& \hfil$X$  & $\exists Z \subseteq X$ countable: $Z$ is order dense and upper dense\\
 Debreu upper separable& \hfil $X$  & $\exists Z \subseteq X$ countable: $Z$ is Debreu dense and Debreu upper dense \\
 \hline\noalign{\smallskip}
\end{tabular}
\end{center}
\end{table}

A subset $Z\subseteq X$, such that $x\prec y$ implies that there exists $z\in Z$ with $x\prec z \prec y$ is called \emph{order dense} \cite{ok2002utility,bridges2013representations}, and $Z$ is called \emph{order dense in the sense of Debreu} (or \emph{Debreu dense} for short) if $x\preceq z \preceq y$. Accordingly, we say that $(X, \preceq)$ is \emph{order separable} if there exists a countable order dense set \cite{mehta1986existence}, and \emph{Debreu separable} if there exists a countable Debreu dense set in $(X,\preceq)$. Notice, our definition of order separability is also known as \emph{weak separability} \cite{ok2002utility}.

It is well-known that a total preorder $\preceq$ has a utility function if and only if it is Debreu separable (e.g. \cite[Theorem 1.4.8]{bridges2013representations}). Moreover, if $\preceq$ is non-total, then Debreu separability still implies the existence of strict monotones \cite{herden2012utility,bridges2013representations,debreu1954representation}. The converse, however, is not true, i.e.~there are preordered spaces with strict monotones that are not Debreu separable. For example, any Debreu dense subset of $\smash{(\mathbb P_\Omega, \preceq_U)}$ is uncountable (if $|\Omega| > 2$)---see Appendix \ref{majo} for a proof. While Debreu separability is concerned with elements satisfying $x \prec y$, an analogous condition that is sufficient for the existence of injective monotones must also consider incomparable elements. 

We call a subset $Z\subseteq X$ \emph{upper dense} if $x \bowtie y$ implies that there exists a $z \in Z$ such that $x \bowtie z \prec y$ \footnote{Notice, for a fixed pair $x,y \in X$ where $x \bowtie y$ holds, there exist $z_1,z_2 \in Z$ such that $x \bowtie z_1 \prec y$ and $y \bowtie z_2 \prec x$. The same applies to upper density in the sense of Debreu, substituting $\prec$ by $\preceq$.}, and it is called \emph{upper dense in the sense of Debreu} (or \emph{Debreu upper dense} for short) if $x \bowtie z \preceq y$. Accordingly, $(X,\preceq)$ is called \emph{upper separable} if there exists a countable subset of $X$ which is both order dense and upper dense \cite{ok2002utility}, and $(X,\preceq)$ is called \emph{Debreu upper separable} if there exists a countable subset which is both Debreu dense and Debreu upper dense. We list all mentioned order denseness and separability properties in Table 1.

\begin{proposition}
\label{weakly upper implies countable mu}
If $(X,\preceq)$ is a Debreu upper separable preordered space, then there exists a countable multi-utility; in particular, there exists an injective monotone.
\end{proposition}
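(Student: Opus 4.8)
The plan is to build an explicit countable multi-utility out of a set witnessing Debreu upper separability, and then to obtain the ``in particular'' clause for free from Proposition \ref{countable implies injective monotone}. So first I would fix a countable $Z=\{z_n\}_{n\ge 0}\subseteq X$ that is simultaneously Debreu dense and Debreu upper dense, which exists by hypothesis. For each $z\in Z$ I would introduce the two up-sets
\[
U_z:=\{w\in X\mid z\preceq w\}, \qquad U_z^{s}:=\{w\in X\mid z\prec w\},
\]
check the routine fact that both are increasing (so that $\chi_{U_z}$ and $\chi_{U_z^{s}}$ are monotones), and take as candidate the family $U:=\{\chi_{U_z}\}_{z\in Z}\cup\{\chi_{U_z^{s}}\}_{z\in Z}$, which is countable because $Z$ is. This is just the increasing-set incarnation of the separating-family idea used elsewhere in the paper.

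To show $U$ is a multi-utility I would invoke the characterization \eqref{charac:multi-util} of Lemma \ref{basic characterizations}: since every member of $U$ is already a monotone, it suffices to verify that whenever $\neg(y\preceq x)$ there is some $u\in U$ with $u(x)<u(y)$, i.e.\ an increasing set in the family containing $y$ but not $x$. I would split $\neg(y\preceq x)$ into the two cases $x\prec y$ and $x\bowtie y$ and treat each with the appropriate density property.

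For $x\prec y$, Debreu density produces $z\in Z$ with $x\preceq z\preceq y$; since $x\prec y$ rules out $x\sim z\sim y$, either $x\prec z$, so that $z\preceq x$ fails and $\chi_{U_z}(x)=0<1=\chi_{U_z}(y)$, or else $x\sim z$ and hence $z\prec y$, so that $z\prec x$ fails and $\chi_{U_z^{s}}(x)=0<1=\chi_{U_z^{s}}(y)$. For $x\bowtie y$, Debreu upper density produces $z\in Z$ with $x\bowtie z\preceq y$; incomparability gives $\neg(z\preceq x)$, whence $\chi_{U_z}(x)=0<1=\chi_{U_z}(y)$. In every case \eqref{charac:multi-util} holds, so $U$ is a countable multi-utility, and Proposition \ref{countable implies injective monotone} then yields an injective monotone.

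The only genuinely delicate point, and the step I expect to require care, is the subcase $x\prec y$ with $x\sim z$: there the principal up-set $U_z$ contains \emph{both} $x$ and $y$ and hence fails to separate them, which is precisely why the strict up-sets $U_z^{s}$ must be adjoined to the family. The underlying reason is that Debreu density, unlike order density, does not guarantee a strictly intermediate witness between $x$ and $y$, so one cannot rely on principal up-sets alone. Everything else reduces to unfolding the definitions of $\preceq$, $\prec$, and $\bowtie$ and checking that $U_z$ and $U_z^{s}$ are increasing.
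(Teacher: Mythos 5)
Your proof is correct and takes essentially the same route as the paper: the paper's proof uses exactly the same countable family $\{\chi_{i(d)}\}_{d\in D}\cup\{\chi_{r(d)}\}_{d\in D}$, with $i(d)=\{y\in X\mid d\preceq y\}$ and $r(d)=\{y\in X\mid d\prec y\}$ playing the roles of your $U_z$ and $U_z^{s}$, the same case split between $\prec$ and $\bowtie$, and the same final appeal to Proposition \ref{countable implies injective monotone}. Your explicit handling of the subcase $x\sim z$ (which forces the use of the strict up-sets) simply spells out a step the paper states tersely.
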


\begin{proof}
Consider a countable set $D$ given by Debreu upper separability. We will show 
\begin{equation*}
    x\preceq y \ \iff \ 
    \begin{cases}
    \chi_{i(d)}(x) \leq \chi_{i(d)}(y) \\
    \chi_{r(d)}(x) \leq \chi_{r(d)}(y) 
    \end{cases}
    \forall d \in D \, ,
\end{equation*}
where $i(d) \coloneqq \{ y\in X| d\preceq y \}$ and $r(d) \coloneqq \{ y\in X| d\prec y \}$. By transitivity $x \preceq y$ implies $\chi_{i(d)}(x) \leq \chi_{i(d)}(y)$ and $\chi_{r(d)}(x) \leq \chi_{r(d)}(y)$ $\forall d \in D$. If $\neg(x \preceq y)$ then either $y \prec x$ or $y \bowtie x$. If $y \prec x$ then there exists some $d \in D$ such that either $\chi_{i(d)}(x) > \chi_{i(d)}(y)$ or $\chi_{r(d)}(x) > \chi_{r(d)}(y)$. If $y \bowtie x$ then there exists some $d \in D$ such that $y \bowtie d \preceq x$ which means $\chi_{i(d)}(x) > \chi_{i(d)}(y)$. Since there exists a countable multi-utility, as we just showed, there is an injective monotone by Proposition \ref{countable implies injective monotone}.
\end{proof}

Since Debreu upper separability still requires a countable Debreu dense set, the converse of Proposition \ref{weakly upper implies countable mu} is again false due to the uncertainty preorder not being Debreu separable (Appendix \ref{majo}). 
However, as can be seen from the proof, if we remove Debreu denseness as a requirement, i.e. if we only require $D$ to be Debreu upper dense, then the only part of the proof that does not work is to follow from $y\prec x$ that there exists an element $v$ of the multi-utility with $v(x)>v(y)$. Since a strict monotone has exactly this property, we obtain the following proposition.

\begin{proposition}
\label{exists countable weakly upper dense implies equivalence R-P and injective monotone}
Consider $(X, \preceq)$ a preordered space. If there exists a countable Debreu upper dense set then the following are equivalent:
\begin{enumerate}[label=(\roman*)]
\item There exists a strict monotone.
\item There exists an injective monotone.
\item There exists a countable multi-utility.
\end{enumerate}
\end{proposition}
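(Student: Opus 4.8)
The plan is to prove the cycle (i)$\Rightarrow$(iii)$\Rightarrow$(ii)$\Rightarrow$(i), noting at the outset that two of the three arrows are already available from earlier results and use none of the standing hypothesis. Indeed, (iii)$\Rightarrow$(ii) is exactly Proposition \ref{countable implies injective monotone}, and (ii)$\Rightarrow$(i) holds since every injective monotone is in particular a strict monotone, as observed right after the definition of injective monotones. Thus the assumption of a countable Debreu upper dense set is needed only to establish the remaining implication (i)$\Rightarrow$(iii): from a strict monotone I must manufacture a countable multi-utility.

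For (i)$\Rightarrow$(iii) I would fix a strict monotone $u$ and a countable Debreu upper dense set $D$, and set
\[
V := \{u\}\cup\{\chi_{i(d)}\}_{d\in D}, \qquad i(d):=\{y\in X\mid d\preceq y\}.
\]
Each $i(d)$ is increasing by transitivity, so each $\chi_{i(d)}$ is a monotone, $u$ is a monotone by hypothesis, and $V$ is countable because $D$ is. By the multi-utility characterization in Lemma \ref{basic characterizations}, it then suffices to check that $\neg(y\preceq x)$ implies $v(x)<v(y)$ for some $v\in V$, which I would split according to the two ways $y\preceq x$ can fail. If $x\prec y$, the strict monotone delivers $u(x)<u(y)$; if $x\bowtie y$, Debreu upper density yields $d\in D$ with $x\bowtie d\preceq y$, whence $y\in i(d)$ and $x\notin i(d)$, so $\chi_{i(d)}(x)=0<1=\chi_{i(d)}(y)$. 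This confirms $V$ is a multi-utility and closes the cycle.

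I do not expect a genuine obstacle here; the only point that needs recognizing---already flagged by the remark preceding the statement---is why weakening Debreu upper separability to mere Debreu upper density costs nothing once a strict monotone is on hand. The functions $\chi_{i(d)}$ by themselves separate incomparable pairs, exactly as in the proof of Proposition \ref{weakly upper implies countable mu}, but they are blind to the comparable relation $x\prec y$; the strict monotone $u$ is precisely the single ingredient that repairs this one gap. Everything else is the routine bookkeeping of matching each failure mode of $y\preceq x$ to the function in $V$ that detects it.
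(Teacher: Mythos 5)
Your proof is correct and takes essentially the same route as the paper: the paper likewise reduces everything to showing $(i)\Rightarrow(iii)$ (with $(iii)\Rightarrow(ii)$ given by Proposition \ref{countable implies injective monotone} and $(ii)\Rightarrow(i)$ immediate since injective monotones are strict monotones), using exactly the multi-utility $\{u\}\cup\{\chi_{i(d)}\}_{d\in D}$ with $u$ a strict monotone. Your write-up merely makes explicit the case analysis ($x\prec y$ handled by $u$, $x\bowtie y$ handled by Debreu upper density and $\chi_{i(d)}$) that the paper compresses into the phrase ``follows along the same lines as the proof of Proposition \ref{weakly upper implies countable mu}.''
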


\begin{proof} Assume there exists a countable Debreu upper dense set $D\subseteq X$. It is enough to show that $(i)$ implies $(iii)$, which follows along the same lines as the proof of Proposition \ref{weakly upper implies countable mu}, but with the multi-utility consisting of $\{u\}\cup\{\chi_{i(d)}\}_{d\in D}$, where $u$ is a strict monotone.
\end{proof}

The situation in Proposition \ref{exists countable weakly upper dense implies equivalence R-P and injective monotone} corresponds exactly to the situation of the uncertainty preorder, which has a countable Debreu upper dense set (Appendix \ref{majo}) and, e.g., Shannon entropy as a strict monotone.

\section{Discussion} 
\label{conclu}

In this paper, we are mainly concerned with the introduction of injective monotones, their relation to other monotones, optimization, multi-utilities and order separability, and the application to the uncertainty preorder. The key contributions of our work are the following. First, 
we refine the classification of preordered spaces based on the existence of monotones. In particular, by extending known results for strict monotones to injective monotones, we find conditions for their existence from different perspectives: other classes of monotones, optimization principles, separating families of increasing sets, and (in particular, countable) multi-utilities. An overview of our conditions in relation to previous work can be found in Figure \ref{order vs functions}. Second, we introduce the notion of upper Debreu separability, an order separability property that allows to extend well-known results about the existence of monotones on Debreu separable spaces to countable multi-utilities and injective monotones. Finally, we apply our general results to the uncertainty preorder, defined on the space of probability distributions over finite sets, in particular, by establishing order separability properties.

\begin{figure}[t]
\centering
\begin{tikzpicture}[scale=0.7, every node/.style={transform shape}]
\node[rounded corners, draw,text height = 15cm,minimum     width=15cm, label={[anchor=north,below=1.5mm]90: \textbf{ all preorders $\cong$ $\exists$ monotone $\cong$ $\exists$ multi-utility}}]  (main) {};
\node[rounded corners, draw, fill=blue!10, text height =13.5cm, minimum width = 13.5cm,label={[anchor=north,below=1.5mm]90: \textbf{$\exists$ strict monotone $\cong$ $\exists$ strict monotone multi-utility}}] at (main.center)  (semi) {};
\node[rounded corners, text height =13.5cm, minimum width = 13.5cm,label={[anchor=north,below=5.5mm]112: \textbf{$\exists f$: argmax $f$ $\subseteq$ $X^{\preceq}_M$}}] at (main.center)  (semi) {};
\node (active) [rounded corners, draw, red,fill=blue!18, text height = 6.5cm, minimum width = 11.5cm,label={[anchor=north,below=1.5mm,red]90:\textbf{$\exists$ injective monotone $\cong$ $\exists$ injective monotone multi-utility}}] at (main.center) {};
\node [rounded corners, text height = 6.5cm, minimum width = 12.5cm,label={[anchor=north,below=5.5mm,red]125:\textbf{$\exists f$, $\exists x_0\in X_M^\preceq$: argmax $f$ = $\big[ x_0 \big]$}}] at (main.center) {};
\node[rounded corners, yshift=-0.5cm, draw,fill=blue!25, text height = 5cm, minimum width = 7.7cm,label={[anchor=south,above=1.5mm]270:\textbf{$\exists$ countable multi-utility}}] at (main.center) (non) {};
\node[rounded corners, draw, yshift=-0.5cm, fill=white, text height = 1cm, minimum width = 4.5cm,label={[anchor=north,below=1.5mm]90:\textbf{$\exists$ utility}}] at (main.center) (non) {};
\node[rounded corners, yshift=-0.5cm, text height = 1cm, minimum width = 1.5cm,label={[anchor=south,above=1.5mm]270:\textbf{$\cong$ total and D. separable}}] at (main.center) (non) {};
\node[ellipse, draw, yshift=0cm, xshift=-1.65cm, minimum height = 11.2cm, minimum width= 10.2cm,label={[anchor=north,above=5mm, black]270:\textbf{Debreu separable}} ] at (main.center){};
\node[ellipse, draw, red, yshift=0cm, xshift=-0.5cm, minimum height = 4cm, minimum width= 6.5cm, label={[anchor=north,below=7mm,red]90:\textbf{Debreu}} ] at (main.center){};
\node[ellipse, yshift=0cm, xshift=-0.5cm, minimum height = 4cm, minimum width= 6.5cm, label={[anchor=north,below=11mm,red]90:\textbf{upper separable}} ] at (main.center){};
\end{tikzpicture}
\caption{Classification of preordered spaces $(X,\preceq)$ in terms of representations by real-valued functions (boxes) and order properties (ellipses). We include known relations in black and our contributions in red. Notice, by Proposition \ref{exists countable weakly upper dense implies equivalence R-P and injective monotone}, the blue area is empty whenever there exists a countable Debreu upper dense set in $X$.}
\label{order vs functions}
\end{figure}
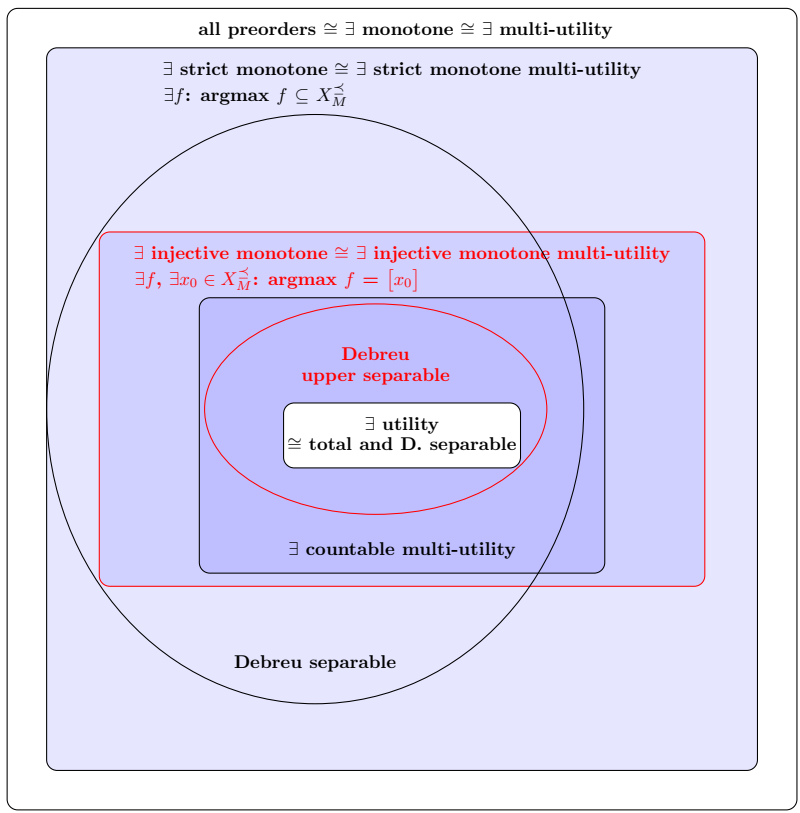

\paragraph{Hierarchy of preordered spaces.} A number of scientific disciplines rely on preorder spaces and their representation by monotones, as was already pointed out in \cite{campion2018survey,candeal2001utility,minguzzi2010time}.  In Figure \ref{fig:classification}, we classify the space of preorders in terms of the existence of certain monotones relevant in various disciplines, which leads to a hierarchy of classes of preordered spaces. The conception of injective monotones then allows for a refinement of this hierarchy of preorders. 

Historically, much of the early development of real-valued representations has focused on \emph{total} preordered spaces that allow for the existence of utility functions. In particular, in the field of mathematical economics, \emph{utility theory} has pioneered the axiomatic study of conditions that ensure the existence of utility functions for a preordered set $(X,\preceq)$, where $X$ is a set of commodities and $\preceq$ is some total preference relation, a total preorder \cite{debreu1954representation,rebille2019continuous}.
Similarly, we can consider \emph{statistical estimation}, where the aim is to infer the distribution of a random variable $X$ from some of its realizations. Assuming the distribution belongs to a family $\{p_{\theta}\}_{\theta \in \mathbb{R}^N}$ for some $N>0$, a loss function $\ell:\mathbb{R}^N \rightarrow \mathbb{R}$ allows rating distributions according to how well they fit with the observed data: $p_{\theta} \preceq_{\ell} p_{\theta'}$ if and only if $-\ell(\theta) \leq -\ell(\theta')$ where $\theta, \theta' \in \mathbb{R}^N$ \cite{hennig2007some}. Choosing a loss function $\ell$ corresponds, thus, to defining a total preorder with a utility representation $\preceq_{\ell}$ on $\{p_{\theta}\}_{\theta \in \mathbb{R}^N}$. 

Another example of a preorder with a utility function is \emph{equilibrium thermodynamics}. Given a thermodynamic system, we consider $(X,\preceq_A)$ where $X$ is the set of all equilibrium states for the system and $x \preceq_A y$ if and only if $y$ is \emph{adiabatically accessible} from $x$ $\forall x,y \in X$ \cite{lieb1999physics}, that is, one can turn $x$ into $y$ using a device and a weight, with the device returning to its initial configuration at the end and the weight being allowed to change position in some gravitational field.
The main concern in the area is
the so-called \emph{entropy representation problem} \cite{candeal2001utility}, that is, the existence of a utility function, called entropy function, for $(X,\preceq_A)$ \cite{lieb1999physics}. 

Assuming a total preorder as in the previous examples is necessary for the existence of a utility function, but renders injective monotones uninteresting, as they become equivalent to strict monotones. When the totality assumption is dropped, the classes of preorders with these monotones can be distinguished. A well-known instance of non-total preorders with injective monotones is our running example, the uncertainty preorder. One of its relevant applications
lies in the study of quantum entanglement, as it characterizes the possible transformations using local operations and classical communications \cite[Theorem 1]{nielsen1999conditions}.
In physics, the uncertainty preorder given by majorization has recently also been extended. Given $\ell_1^1(\mathbb{R}^+):= \{ (p_i)_{i \in \mathbb{N}}| 0 \leq p_i \leq 1, \sum_{i=1}^{\infty}p_i=1\}$, we define \emph{infinite majorization} $\preceq_{IM}$ \cite{li2013neumann} for any $p,q \in \ell_1^1(\mathbb{R}^+)$ like
\begin{equation*}
    p  \preceq_{IM} q \ :\Leftrightarrow \ \sum_{i=1}^{k} p^{\downarrow}_i \leq \sum_{i=1}^{k} q^{\downarrow}_i \text{ } \forall k \in \mathbb{N},
\end{equation*}
where $p^{\downarrow}$ represents $p$ ordered in a decreasing way. Since $\preceq_{IM}$ is defined through a countable multi-utility, there exist injective monotones by Proposition \ref{countable implies injective monotone}. Finally, the uncertainty preorder is also an instance of multicriteria optimization \cite{ehrgott2005multicriteria}, also known as vector optimization \cite{jahn2009vector}, since it is concerned with the simultaneous optimization of a finite number of objective functions \eqref{uncert rela}. Notice strict and injective monotones  belong to the \textit{scalarization techniques} \cite{jahn2009vector,ehrgott2005multicriteria,bevilacqua2018multiobjective} in vector optimization and always exist, again by Proposition \ref{countable implies injective monotone}. 

Preordered spaces from the next general class, the ones with strict monotones, include \emph{general relativity}. Spacetime can be studied as a pair $(M, \preceq_C)$ where $M$ is a set of events and $\preceq_C$ is a causal relation, a partial order specifying which events can influence others, which lie to the future of others \cite{bombelli1987space}. A usual question is to establish sufficient conditions on $(M, \preceq_C)$ for the existence of strict monotones, which are referred to as \emph{time functions} \cite{minguzzi2010time} and 
are usually required to be continuous according to some topology. The study of physically plausible conditions from which countable multi-utilities or injective monotones can be constructed has, to our knowledge, not been addressed yet in the field.  Notice, spacetime was originally approached through a differentiable structure $(M,g)$, where $M$ is a manifold and $g$ a metric, and was only later studied as a partial order \cite{bombelli1987space}. 

Another case of preorders with strict monotones is  trumping. Consider $(\mathbb{P}_{\Omega}, \preceq_T )$ the space of probability distributions over some finite set $\Omega$, $\mathbb{P}_{\Omega}$, with the trumping preorder
     \begin{equation*}
        p \preceq_T q \iff \exists r \in \mathbb{P}_{\Omega'} \text{ } |\Omega'|< \infty \text{ s.t. } p \otimes r \preceq_M q \otimes r,
\end{equation*}
where $p \otimes r:= (p_1r_1,..,p_1r_{\Omega'},..,p_{\Omega}r_1,..,p_{\Omega} r_{\Omega'})$ $\forall p \in \mathbb{P}_{\Omega}$, $r \in \mathbb{P}_{\Omega'}$ and $|\Omega'|< \infty$ \cite{muller2016generalization}. Trumping extends majorization taking into account transformations using a third state, a \emph{catalyst}. As an example, consider $p:=(0.4,0.4,0.1,0.1)$, $q:=(0.5,0.25,0.25,0)$ and $r:=(0.6,0.4)$. Notice $\neg(p \preceq_M q)$ but $p \otimes r \preceq_M q \otimes r$, implying $p \preceq_T q$. Questions regarding physically meaningful strict monotones and multi-utilities for trumping are relevant \cite{turgut2007catalytic}. As no countable multi-utility has been found, it remains a question whether injective monotones do exist.

A final example from the most general class of preorders, the one where only monotones exist, are \emph{social welfare relations} (SWR) in economics. A SWR is a partial order $\preceq_{S}$ defined on the countably infinite product of the unit interval $X:=\prod_{n \in \mathbb{N}}\big[0,1\big]$. A SWR is said to be \textit{ethical} if $(1)$ given $x,y \in X$ with some $i,j \in \mathbb{N}$ such that $x_i=y_j$, $y_i=x_j$ and $x_k = y_k$ $\forall k \not \in \{i,j\}$ we have $x \sim_{S} y$ and $(2)$ given $x,y \in X$ where $x_i \leq y_i$ $\forall i \in \mathbb{N}$ and $x_j < y_j$ for some $j \in \mathbb{N}$ then $x \prec_S y$. Any ethical SWR is an example of a preordered space without strict monotones \cite[Proposition 1]{banerjee2010multi} and, thus, without both injective monotones and countable multi-utilities. 

\paragraph{Monotones and topology.} While we have focused on preordered spaces and left some brief comments regarding topology for Appendix \ref{topology}, in the past they have been often studied together. The original interest in functions representing order structures was concerned with (continuous) utility representations of total topological preordered spaces \cite{debreu1954representation,debreu1964continuity,eilenberg1941ordered}. Of particular importance were results concerning the existence of a continuous utility function for both connected and separable total topological preordered spaces \cite{eilenberg1941ordered} and for second countable total topological preordered spaces \cite{debreu1954representation}. Among the classical results we also find the existence of an order isomorphism between a subset of the real numbers and any total order with countably many jumps whose order topology is second countable \cite{fleischer1961numerical}. Based on the work of \cite{nachbin1965topology} relating topology and order theory, in particular a generalization of Urysohn's separation theorem, the classical results where reproved and sometimes generalized for example in \cite{herden1989existence,mehta1977topological,mehta1986existence,mehta1986theorem,mehta1988some,bosi2020topologies}. 

\paragraph{Multi-utility representations.} The study of non-total order structures was introduced in \cite{aumann1962utility}. Representation of non-total preorders by multi-utilities came later and was remarkably developed in \cite{evren2011multi}. Although strict monotones can be traced back to \cite{peleg1970utility,richter1966revealed}, there continue to be advances in the field \cite{herden2012utility,rebille2019continuous,bosi2020mathematical}. In fact, it was only recently in \cite{minguzzi2013normally} where strict monotone multi-utilities were introduced and later in \cite{alcantud2013representations,alcantud2016richter} where they were further studied. The relation of these ideas with optimization and the existence of maximal elements is also present in the literature \cite{bosi2017maximal,white1980notes,bevilacqua2018multiobjective,bevilacqua2018maximal,bosi2018upper}. Countable multi-utilities where studied particularly in \cite{bevilacqua2018representation,alcantud2016richter} while finite multi-utility representations were notably advanced in \cite{kaminski2007quasi,ok2002utility} and, in vector optimization, in \cite{jahn2009vector}.

\paragraph{Open questions.} 
While we have shown the existence of injective monotones for the widely studied class of preorders with countable multi-utilities, our construction is impractical since it relies on an infinite sum. For specific applications, injective monotones with a simpler representation are of interest. In general, any of the disciplines where these ideas are applied would benefit from a better understanding of the classification of preordered spaces in terms of real-valued monotones.
For example, regarding the maximum entropy principle, the classification could be useful in order to reconsider the reasoning behind the choice of Shannon entropy. Even though there have been many principled approaches to ``derive'' Shannon entropy as a measure of uncertainty in the past, such as \cite{Aczel1974,Shore1980}, and for many practical purposes its appealing properties overweigh the bias in choosing this particular strict monotone, the question remains whether one should \emph{maximize entropy} or \emph{maximize uncertainty}. Quantum physics could also benefit as, for instance, the preorder underlying entanglement catalysis, trumping, is not well understood \cite{muller2016generalization}. Many relevant open questions related to our work can also be found in \cite{bosi2020mathematical}, for example, while we have focused mostly on preordered spaces and made some remarks on semicontinuity, it would be important to study continuous injective monotones in terms of topological properties of the underlying spaces, as in the classical works on utility functions.


\newpage

\begin{appendix}
\section{Appendix}

\subsection{Entropy and the uncertainty preorder}
\label{majo}

In the following, we provide proofs for statements regarding the uncertainty preorder $\preceq_U$ and entropy $H$ that appear throughout the main part of this article, in particular, all results are stated with respect to the preordered space \smash{$(\mathbb P_\Omega, \preceq_U)$}, for a finite set $\Omega$.

\begin{lemma}[Basic facts] ~
\begin{enumerate}[label=(\roman*)]
    \item Shannon entropy is a strict monotone. If $|\Omega| \geq 3$ then it is not an injective monotone.
    \item If $|\Omega|\geq 3$ then for all $c\in (0,\log |\Omega|)$ there is an uncountable set $S_c$ such that $H(s)=c$ $\forall s \in S_c$. In particular, there are $p,q \in \mathbb P_\Omega$ with $c=H(p)=H(q)$ but $p\bowtie q$ for all $c\in (0,\log |\Omega|)$.
\end{enumerate}
\end{lemma}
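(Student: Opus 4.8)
The plan is to first establish the strict-monotone claim in (i), then prove the uncountable level-set claim in (ii), and finally deduce from (ii) that $H$ fails to be an injective monotone, which yields the second half of (i).

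For the strict-monotone claim, I would use that $p \preceq_U q$ is exactly the statement that $p$ majorizes $q$, so $q$ is obtained from $p$ by finitely many Robin-Hood transfers (T-transforms): repeatedly replacing a pair of entries $a>b$ by $a-\varepsilon,\, b+\varepsilon$ with $0<\varepsilon\le a-b$. Writing $H(p)=\sum_n \eta(p_n)$ with $\eta(x)=-x\log x$ strictly concave, each such transfer can only increase $\eta(a)+\eta(b)$, and increases it strictly whenever the transfer is nontrivial ($\varepsilon>0$, $a>b$), since for fixed sum $a+b$ the quantity $\eta(a)+\eta(b)$ is strictly maximized as the two arguments move toward each other. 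Hence $p\preceq_U q$ implies $H(p)\le H(q)$, and $p\prec_U q$ --- where $q$ is not a permutation of $p$, forcing at least one nontrivial transfer --- implies $H(p)<H(q)$. Thus $H$ is a strict monotone.

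For (ii), fix $N=|\Omega|\ge 3$ and $c\in(0,\log N)$, and work in the relative interior $\Delta^\circ\subseteq\mathbb P_\Omega$ of fully-supported distributions, which is homeomorphic to $\mathbb R^{N-1}$ with $N-1\ge 2$. The function $H$ is continuous on $\Delta^\circ$, attains the value $\log N>c$ at the uniform distribution, and takes values arbitrarily close to $0$ at interior points near a vertex (e.g.\ $(1-(N-1)\delta,\delta,\dots,\delta)$ as $\delta\to 0^+$); hence both $\{H<c\}$ and $\{H>c\}$ are nonempty open subsets of $\Delta^\circ$, with union $\Delta^\circ\setminus H^{-1}(c)$. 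If $H^{-1}(c)\cap\Delta^\circ$ were countable, then, since removing a countable set from $\mathbb R^{N-1}$ (with $N-1\ge 2$) leaves a path-connected space, $\Delta^\circ\setminus H^{-1}(c)$ would be connected and could not split into the two disjoint nonempty open sets $\{H<c\}$ and $\{H>c\}$ --- a contradiction. Therefore $S_c:=H^{-1}(c)\cap\Delta^\circ$ is uncountable. (Alternatively, $c$ is a regular value of $H$ on $\Delta^\circ$, so $S_c$ is a nonempty $1$-manifold.)

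Finally, for the ``in particular'' statement and the non-injectivity in (i): each $\sim_U$-equivalence class is a permutation orbit, hence has at most $N!$ elements, so the uncountable set $S_c$ must meet uncountably many distinct classes. Choosing $p,q\in S_c$ in different classes gives $H(p)=H(q)=c$ with $\neg(p\sim_U q)$; by the strict monotonicity just proved, neither $p\prec_U q$ nor $q\prec_U p$ can hold, so $p\bowtie q$. Since then $H(p)=H(q)$ but $\neg(p\sim_U q)$, $H$ is not an injective monotone. The main obstacle is getting the strictness right in the monotone claim (strict Schur-concavity) and making the uncountability rigorous; the topological step --- that deleting a countable set keeps $\mathbb R^{N-1}$ connected --- is the cleanest route, with the regular-value argument as a smooth alternative.
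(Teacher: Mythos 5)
Your proof is correct, but both halves take routes genuinely different from the paper's. For (i), the paper simply invokes strict Schur-concavity of $p \mapsto \sum_i \eta(p_i)$ for strictly concave $\eta$ (citing \cite[C.1.a]{marshall1979inequalities}), whereas you re-derive it from the Hardy--Littlewood--P\'olya characterization of majorization by finite chains of T-transforms. One boundary case in your argument needs patching: with your convention $0<\varepsilon\le a-b$, the transfer with $\varepsilon=a-b$ is a pure swap and leaves $\eta(a)+\eta(b)$ unchanged, so ``nontrivial'' in your strictness claim must mean $0<\varepsilon<a-b$. The fix is immediate: if every transfer in the chain were trivial or a full swap, then $q$ would be a permutation of $p$, contradicting $p\prec_U q$; hence some transfer is strictly interior, and $H$ strictly increases at that step. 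For (ii), the paper gives an explicit construction: it picks a low-entropy segment near a vertex, fans out the segments $\ell_t$ joining its points $r_t$ to the uniform distribution $u$, and applies the intermediate value theorem on each $\ell_t$ to produce a continuum $\{p_t\}_{t\in[0,1]}$ of distinct points with $H(p_t)=c$, distinctness holding because the $\ell_t$ meet only at $u$. You instead argue by contradiction from connectivity: removing a countable set from $\Delta^\circ\cong\mathbb R^{N-1}$ with $N-1\ge 2$ leaves a path-connected space, which is impossible here since $\{H<c\}$ and $\{H>c\}$ are disjoint, nonempty, open, and cover $\Delta^\circ\setminus S_c$. Your route is shorter, more conceptual, and generalizes to any continuous function separating two nonempty open sets on a connected manifold of dimension at least two, but it is non-constructive; the paper's fan construction has the side benefit of exhibiting an explicit injection of an interval into the set of incomparable same-entropy pairs, which the paper reuses in Section 3 to conclude that $I_H$ has cardinality $\mathfrak c$. (Your parenthetical alternative is slightly off: for general $N$ a regular level set of $H$ in $\Delta^\circ$ is an $(N-2)$-manifold, a $1$-manifold only when $N=3$; nonemptiness plus positive dimension still yields uncountability.) Your final step --- equivalence classes are permutation orbits of size at most $N!$, so $S_c$ meets uncountably many classes, and strict monotonicity upgrades non-equivalence at equal entropy to incomparability --- is exactly the reasoning the paper leaves implicit in its closing ``in particular'' sentence, and you spell it out more carefully than the paper does.
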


\begin{proof}
$(i)$ Strict monotonicity of $H$ comes from the fact $H(p)= \sum_{i=1}^{|\Omega} f(p_i)$ where $f(x)=-x \log(x)$ is a strictly convex function. Given any other strictly convex $f$, strict monotonicity will still hold. One can find the details in \cite[C.1.a]{marshall1979inequalities}. $H$ is not an injective monotone for $|\Omega|\geq 3$ by $(ii)$.

$(ii)$ Given $p,q \in \mathbb{P}_{\Omega}$ we denote by $\overline{pq}$ the segment with endpoints $p,q$. Consider $u \in \mathbb{P}_{\Omega}$ the uniform distribution, $e_i,e_j \in \mathbb{P}_{\Omega}$ Dirac distributions for two different elements $i,j \in \Omega$ and some $c \in \big(0,\log|\Omega|\big)$. Consider some $c'$ s.t. $0<c'< \text{min} \{c,H(m)\}$ where $m$ is the middle point of $\overline{e_i e_j}$. By the intermediate value theorem there exists some $r \in \overline{e_i m}$ such that $H(r)=c'$. Consider now a parametrization of $\overline{e_i r}$: $\{r_t\}_{t \in [0,1]}$ and define $\ell_t\coloneqq \overline{r_t u}$ for each $t \in [0,1]$. Again by the intermediate value theorem, since $H(r_t)<c$ $\forall t \in [0,1]$, there exists some $p_t \in \ell_t$ such that $H(p_t)=c$ $\forall t \in [0,1]$. By construction, given $t,t'\in [0,1]$ $t\neq t'$ we have $p_t \neq p_{t'}$ since $\ell_t \cap \ell_{t'}= \{u\}$ whenever $t \neq t'$ which means $\{p_t\}_{t \in [0,1]}$ is uncountable. In particular, there are  $ t_c,t_c'\in [0,1]$ $t_c\neq t_c'$ such that $p_{t_c} \bowtie p_{t'_c}$ and $H(p_{t_c})=H(p_{t_c'})=c$ for every $c \in (0,\log|\Omega|)$.
\end{proof}

\smallskip

\begin{lemma}[Debreu separability] ~

\begin{enumerate}[label=(\roman*)]
\item If $|\Omega|=2$ then $(\mathbb{P}_{\Omega}, \preceq_U)$ is order separable. In particular, $(\mathbb{P}_{\Omega}, \preceq_U)$ is Debreu separable for $|\Omega|=2$.
\item If $|\Omega| \geq 3$ then any subset $Z \subseteq \mathbb{P}_{\Omega}$ which is Debreu dense in $(\mathbb{P}_{\Omega}, \preceq_U)$ has the cardinality of the continuum $|Z|= \mathfrak{c}$.
\item For any $|\Omega|<\infty$, there exists a countable upper dense set $Z \subseteq \mathbb{P}_{\Omega}$.
\end{enumerate}
 \end{lemma}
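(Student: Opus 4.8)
The plan is to use throughout the partial-sum description of majorization: writing $S_k(p):=\sum_{n=1}^{k}p_n^{\downarrow}$, the definition \eqref{uncert rela} says $p\preceq_U q$ iff $S_k(p)\ge S_k(q)$ for all $k\in\{1,\dots,|\Omega|-1\}$, so two distributions are equivalent iff they share a decreasing rearrangement; all three arguments are phrased through these permutation-invariant quantities. For $(i)$, when $|\Omega|=2$ the only constraint is $S_1(p)=\max(p_1,p_2)\ge S_1(q)$, so $\preceq_U$ is total and, up to equivalence, order-isomorphic to $[\tfrac12,1]$ with the reversed order via $p\mapsto\max(p_1,p_2)$. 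I would take $Z:=\{(r,1-r):r\in\mathbb Q\cap[\tfrac12,1]\}$; given $p\prec_U q$, i.e. $\max p>\max q$, rational density yields $r\in(\max q,\max p)$ (note $r>\tfrac12$), and $z=(r,1-r)\in Z$ satisfies $p\prec_U z\prec_U q$. Hence $Z$ is order dense, so the space is order separable, and since every order dense set is Debreu dense, it is Debreu separable.

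For $(ii)$, the idea is to exhibit continuum-many ``jumps'' forcing a Debreu dense set to contain continuum-many distinct points. For each $a\in(\tfrac13,\tfrac12)$ I set, as sorted distributions padded with zeros,
\[
q_a:=(a,\,a,\,1-2a,\,0,\dots,0),\qquad p_a:=\bigl(a,\,\tfrac{1-a}{2},\,\tfrac{1-a}{2},\,0,\dots,0\bigr),
\]
both legitimate elements of $\mathbb P_\Omega$ for $a$ in this range. One checks $S_1(q_a)=S_1(p_a)=a$ while $S_2(q_a)=2a>\tfrac{1+a}{2}=S_2(p_a)$ for $a>\tfrac13$, so $q_a\prec_U p_a$. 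If $Z$ is Debreu dense it contains some $z_a$ with $q_a\preceq_U z_a\preceq_U p_a$, and squeezing the first partial sum forces $S_1(z_a)=a$, i.e. the largest component of $z_a$ equals $a$. Distinct values of $a$ thus give distinct $z_a$, whence $|Z|\ge|(\tfrac13,\tfrac12)|=\mathfrak c$; combined with $|Z|\le|\mathbb P_\Omega|=\mathfrak c$ this gives $|Z|=\mathfrak c$. The same $q_a,p_a$ work for every $|\Omega|\ge3$ because the padding zeros leave $S_1,S_2$ unchanged.

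For $(iii)$, I would take $Z$ to be the countable set of all distributions with rational entries and show it is upper dense. The key observation is that incomparability is an open condition: if $x\bowtie y$ there are indices $k,l$ with $S_k(y)<S_k(x)$ and $S_l(y)>S_l(x)$. Any $z$ with $z\prec_U y$ satisfies $S_l(z)\ge S_l(y)>S_l(x)$, hence $\neg(x\preceq_U z)$; and if $z$ is close enough to $y$ that $S_k(z)<S_k(x)$ (possible since $S_k(z)\ge S_k(y)$ and $S_k(y)<S_k(x)$), then also $\neg(z\preceq_U x)$, so $x\bowtie z$. Thus it suffices to produce, for every non-Dirac $y$ and every $\varepsilon>0$, a rational $z$ with $z\prec_U y$ and $\max_k\bigl(S_k(z)-S_k(y)\bigr)<\varepsilon$, i.e. to approximate $y$ \emph{from below} in the majorization order by rational distributions. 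I would do this by restricting to the face spanned by the support of $y$, rounding the top partial sums slightly upward and absorbing the deficit in the lower components, keeping the decreasing rearrangement decreasing and the partial sums dominating those of $y$. The symmetry of upper density (the footnote) then yields the witness below $x$ by exchanging the roles of $x$ and $y$.

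The main obstacle is this approximation-from-below in $(iii)$: a naive rational grid fails (along a ``vertical segment'' one cannot perturb the pinned coordinate to a rational value), so one must perturb within the correct face of the simplex while controlling concavity of the partial-sum sequence. When $y$ has distinct, fully supported components the cone of admissible concentrations at $y$ is full-dimensional and rationals are dense in it; ties among the nonzero components and zero components pin some partial sums and shrink this cone, so the construction needs a short case analysis (or an explicit componentwise rounding) to remain simultaneously majorized below $y$ and rational. Parts $(i)$ and $(ii)$ are comparatively routine once the partial-sum reformulation is in place.
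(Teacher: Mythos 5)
Your parts (i) and (ii) are correct and are essentially the paper's own argument. For (i), the paper likewise picks a rational $s$ with $q_1^{\downarrow}<s<p_1^{\downarrow}$ and uses the witness $(s,1-s)$, i.e.\ $\mathbb{Q}^2\cap\mathbb{P}_{\Omega}$ as the countable order dense set. For (ii), the paper runs the identical squeeze on the first partial sum: for each $x\in(\tfrac12,1)$ it builds a pair $q_x\prec_U p_x$ whose largest components both equal $x$, so that any Debreu dense $Z$ must contain an element whose largest component is exactly $x$, giving an injection of a continuum into $Z$; your explicit family $(q_a,p_a)$, $a\in(\tfrac13,\tfrac12)$, does the same job, and your padding-with-zeros remark handles $|\Omega|>3$ at least as cleanly as the paper's reduction to the $|\Omega|=3$ case.

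Part (iii) is the one place where you stop short of a proof. Your reduction is sound and mirrors the paper's (and you are right that $x\bowtie y$ forces $y$ to be non-Dirac, since a Dirac distribution is $\preceq_U$-below everything): it suffices to produce a rational $z\prec_U y$ with $S_j(z)-S_j(y)<\varepsilon$ for all $j$, where $\varepsilon\coloneqq S_k(x)-S_k(y)>0$ at your index $k$. But this approximation from below is precisely the technical content of the paper's proof, and you leave it at ``rounding \dots absorbing the deficit \dots needs a short case analysis''. It does work, and the case analysis over ties you worry about is unnecessary, because ties are broken automatically by an inductive choice of strictly decreasing rationals: with $y$ sorted and $N$ its support size ($N\ge 2$), pick $q_1\in\bigl(y_1^{\downarrow},y_1^{\downarrow}+\delta\bigr)\cap\mathbb{Q}$, then inductively $q_i\in\bigl(y_i^{\downarrow},\min(q_{i-1},\,y_i^{\downarrow}+\delta)\bigr)\cap\mathbb{Q}$ for $i=2,\dots,N-1$ (nonempty since $q_{i-1}>y_{i-1}^{\downarrow}\ge y_i^{\downarrow}$), and set $q_N\coloneqq 1-\sum_{i<N}q_i$. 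For $\delta$ small enough one gets $0<q_N<y_N^{\downarrow}\le y_{N-1}^{\downarrow}<q_{N-1}$, so $z=(q_1,\dots,q_N,0,\dots,0)$ is a sorted rational distribution with $S_j(z)>S_j(y)$ for $j<N$, $S_j(z)=S_j(y)=1$ for $j\ge N$, and $S_j(z)-S_j(y)<N\delta<\varepsilon$; hence $z\prec_U y$ and $x\bowtie z$ by your own reduction. This is exactly the paper's construction; its more elaborate $\epsilon_i$'s appear only because it folds the incomparability bound $S_m(x)-S_m(y)$ directly into the rounding, a step your cleaner two-stage reduction renders unnecessary. With this filled in, your argument is complete and coincides with the paper's.
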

 
 \begin{proof} 
  For simplicity of notation, in the following we omit the subscript $U$ and thus write $\preceq$ for $\preceq_U$ (analogously for $\bowtie$ and $\prec$). 

 $(i)$ Consider $p,q \in \mathbb{P}$ such that $p \prec q$. By definition we have $q_1^{\downarrow} < p_1^{\downarrow}$. Consider some $s \in \mathbb{Q}$ such that $q_1^{\downarrow} < s < p_1^{\downarrow}$. Notice by normalization $\frac{1}{2} \leq q_1^{\downarrow} < s$ and by normalization again $1-s < s$. Thus $p \prec r \prec q$ where $r^{\downarrow}:=(s,1-s)$ and $\mathbb{Q}^2 \cap \mathbb{P}_{\Omega}$ is  countable and order dense in $(\mathbb{P}_{\Omega}, \preceq_U)$ for $|\Omega|=2$. In particular, $(\mathbb{P}_{\Omega}, \preceq)$ is Debreu separable for $|\Omega|=2$  which we could have known applying Theorem 1.4.8 in \cite{bridges2013representations} since for $|\Omega|=2$ there is a utility function, $u_1$.
 
 $(ii)$ Fix $|\Omega|=3$. Consider for some $x \in \big(\frac{1}{2}, 1 \big)$ some $p \in \mathbb{P}_{\Omega}$ such that $p_1^{\downarrow}=x$. Notice $p_2^{\downarrow} + p_3^{\downarrow} < x$ by normalization. Take some $\epsilon >0$ such that $p_2^{\downarrow}+\epsilon < x$ and $\epsilon \leq p_3^{\downarrow}$ to define $q \in \mathbb{P}_{\Omega}$ where $q^{\downarrow}:=(x,p_2^{\downarrow}+\epsilon,p_3^{\downarrow}-\epsilon)$. Notice $q \prec p$. Notice for any $x \in \big(\frac{1}{2}, 1 \big)$ we can define a pair $q_x,p_x\in \mathbb{P}_{\Omega}$ such that $q_x \prec p_x$ as we did before where for any $t \in \mathbb{P}_{\Omega}$ such that $q_x \preceq t \preceq p_x$ we have $t_1^{\downarrow} = x$. Given $Z \subseteq \mathbb{P}_{\Omega}$ a subset which is Debreu dense in $(\mathbb{P}_{\Omega}, \preceq)$ there exists for any $x \in \big(\frac{1}{2}, 1 \big)$ some $z_x \in Z$ such that $q_x \preceq z_x \preceq p_x$. Fix for every $x \in \big(\frac{1}{2}, 1 \big)$ some $z_x$. Notice, given $x,y \in \big(\frac{1}{2}, 1 )$, then $z_x = z_y$ implies $x=(z_x)_1^{\downarrow}=(z_y)_1^{\downarrow}=y$ which means that $\varphi: (\frac{1}{2}, 1)  \to Z, x \mapsto z_x$ is injective, implying $\mathfrak{c} \leq |Z|$. Since $Z \subseteq \mathbb{P}_{\Omega}$ and $|\mathbb{P}_{\Omega}| = \mathfrak{c}$ we have $|Z|=\mathfrak{c}$. In case $|\Omega| > 3$  any Debreu dense subset would also be Debreu dense in the subset with $|\Omega|=3$. We can thus follow the above lines and get the same conclusion for any $|\Omega| \geq 3$.
 
 $(iii)$ Consider $x,y \in \mathbb{P}_{\Omega}$ such that $x \bowtie y$. Since $x\bowtie y$ there exist $n,m \leq |\Omega|-1$ such that $\sum_{i=1}^{n}x_i^{\downarrow} < \sum_{i=1}^{n}y_i^{\downarrow}$ and $\sum_{i=1}^{m}x_i^{\downarrow} > \sum_{i=1}^{m}y_i^{\downarrow}$. Notice $y_i^{\downarrow} < 1$ $\forall i\leq |\Omega|$ since in the opposite case $y \preceq x$ $\forall x \in \mathbb{P}_{\Omega}$. Consider $1<k \leq |\Omega|$ the largest integer such that $y_k^{\downarrow} >0$ and define $\{\epsilon_i\}_{i=1}^{k-1}$ where 
 \begin{equation*}
    \begin{cases}
    0<\epsilon_i < \text{min} \{y_k^{\downarrow},\text{ } \sum_{j=1}^{m}x_j^{\downarrow} - \sum_{j=1}^{m}y_j^{\downarrow}\} & \text{if} \text{ } i=1,\\
    0<\epsilon_i < \text{min} \{ y_k^{\downarrow}- \sum_{j=1}^{i-1} \epsilon_j,\text{ } \sum_{j=1}^{m}x_j^{\downarrow} - \sum_{j=1}^{m}y_j^{\downarrow}-\sum_{j=1}^{i-1} \epsilon_j\} & \text{if} \text{ } 1<i\leq m,\\
   0<\epsilon_i < y_k^{\downarrow}- \sum_{j=1}^{i-1} \epsilon_j & \text{if} \text{ } m<i< k.
    \end{cases}
\end{equation*}
Notice $m<k$ since the opposite case leads to $\sum_{i=1}^{m}x_i^{\downarrow} > \sum_{i=1}^{m}y_i^{\downarrow}=\sum_{i=1}^{k}y_i^{\downarrow}=1$ contradicting normalization. For all $i < k$ choose $q_i \in \big(y_i^{\downarrow},y_i^{\downarrow} + \epsilon_i\big) \cap \mathbb{Q}$ such that $q_i \geq q_{i+1}$ and $q_k = 1- \sum_{i=1}^{k-1} q_i$. Then $z:=(q_1,q_2,..,q_k,0,..,0)$ has $|\Omega|-k$ zeros, the same number of zeros as $y$, and $z=z^{\downarrow}$, since $q_k < 1-\sum_{i=1}^{k-1} y_i^{\downarrow} = y_k \leq y_{k-1}<q_{k-1}$. By construction, we have $\sum_{j=1}^i z_j^{\downarrow} > \sum_{j=1}^i y_j^{\downarrow}$ $\forall i< k$ implying $\sum_{j=1}^{n}x_j^{\downarrow} < \sum_{j=1}^{n}z_j^{\downarrow}$, $\sum_{j=1}^{|\Omega|} z_j^{\downarrow}=\sum_{j=1}^k z_j^{\downarrow}=1$ and $\sum_{j=1}^m z_j^{\downarrow} < \sum_{j=1}^{m} x_j^{\downarrow}$ since 
 \begin{equation*}
     \sum_{j=1}^m z_j^{\downarrow} < \sum_{j=1}^m y_j^{\downarrow} + \epsilon_j < \sum_{j=1}^m y_j^{\downarrow} + \sum_{j=1}^m x_j^{\downarrow} - \sum_{j=1}^m y_j^{\downarrow} = \sum_{j=1}^m x_j^{\downarrow}
 \end{equation*}
 where in the first inequality we applied $z_j^{\downarrow} = q_j < y_j^{\downarrow}+\epsilon_j$ $\forall j \leq m$ and in the second we applied the definition of $\epsilon_m$ by which $ \sum_{j=1}^{m} \epsilon_j< \sum_{j=1}^m x_j^{\downarrow} - \sum_{j=1}^m y_j^{\downarrow}$. Thus, $x \bowtie z \prec y$. We have shown $\mathbb{Q}^{|\Omega|} \cap \mathbb{P}_{\Omega}$ is a countable upper dense set in $(\mathbb{P}_{\Omega}, \preceq_U)$ for any $|\Omega|< \infty$. 
 \end{proof}

\subsection{Proofs}

\subsubsection{Preorders without strict monotones \cite[Corollary 2.2]{alcantud2016richter}}
\label{no strict mono}
For example, consider the power set of the reals equipped with set inclusion, $(\mathcal{P}(\mathbb{R}), \subseteq)$. Since $\subseteq$ is reflexive, transitive, \emph{and} antisymmetric (i.e.~a partial order), by Szpilrajn extension theorem there exists a totally ordered space $(\overline{\mathcal{P}(\mathbb{R})}, \preceq)$ extending $(\mathcal{P}(\mathbb{R}), \subseteq)$, respecting the relations that already exist and relating the incomparable elements (e.g.~overlapping intervals). Hence, if there was a strict monotone $v:\overline{\mathcal{P}(\mathbb{R})} \rightarrow \mathbb{R}$, then $v(U) = v(V)$ for some $U, V\subseteq \mathbb R$ would imply that $U = V$, because w.l.o.g. $U\preceq V$, and $U\prec V$ cannot hold since $v$ is a strict monotone. This contradicts Cantor's theorem by which the cardinality of the power set $\mathcal P(\mathbb{R})$ is strictly greater than that of $\mathbb R$.

\subsubsection{Proof of Lemma \ref{series}}

First, note that for $ r \in (0, 1)$, we have
\begin{equation}
\label{ineq geo series}
  r \in \big(0, \tfrac{1}{2}\big) \iff r^m > \sum_{n=m+1}^{\infty} r^n \text{ } \forall m \geq 0 \, .
\end{equation}
This is a direct consequence of the closed-form formula of the geometric series and its partial sums, by which we have for any $r \in (0, 1)$ and $m\geq 0$, 
\[
\sum_{n=m+1}^\infty r^n = \frac{1}{1-r} - \sum_{n=0}^m r^n = \frac{r}{1-r} \, r^m \, ,
\]
so that $r^m> \sum_{n=m+1}^\infty r^n $ if and only if $r<1-r$, i.e.~$r\in (0,\frac{1}{2})$. 

Consider $x,y\in X$ and $m\in\mathbb N$ be the smallest index such that $\chi_{A_m}(x)\not=\chi_{A_m}(y)$. Assume $\chi_{A_m}(x)<\chi_{A_m}(y)$, in particular $\chi_{A_m}(x) = 0$ and $\chi_{A_m}(y) = 1$. Then
\begin{equation*}
\begin{split}
c(x) & \ \leq \  \sum_{n=0}^{m} r^n \chi_{A_n}(x) +  \sum_{n = m+1}^{\infty} r^{n} \\ 
& \ \stackrel{(a)}{<} \ \sum_{n=0}^{m} r^n \chi_{A_n}(x) +  r^m \ \stackrel{(b)}{=}  \ \sum_{n=0}^{m-1} r^n \chi_{A_n}(y) + r^m \chi_{A_m}(y)  \ \leq \ c(y)
\end{split}
\end{equation*}
where $(a)$ is due to \eqref{ineq geo series} and $(b)$ follows from the choice of $m$. Conversely, if $c(x)<c(y)$, then let $m$ be the first index where $\chi_{A_m}(x)\not = \chi_{A_m}(y)$. Clearly, if $\chi_{A_m}(y) <  \chi_{A_m}(x)$, then by the same argument as above, $c(y)<c(x)$, contradicting the hypothesis. Hence $\chi_{A_m}(x) <  \chi_{A_m}(y)$.

\subsubsection{Proof of Lemma \ref{lemma:sep}}

Given a family of increasing sets $(A_n)_{n \in \mathbb{N}}$, $c$ in \eqref{eq:constructInjectiveMonotone} is a monotone. If $x \prec y$, then by assumption there exists $n\in\mathbb N$ such that $x\not \in A_n$ and $y\in A_n$, i.e.~$\chi_{A_n}(x) < \chi_{A_n}(y)$, implying that $c$ is a strict monotone by Lemma \ref{series}. Furthermore, if $x \bowtie y$ there exists $n\in\mathbb N$ such that $x\not \in A_n$ and $y\in A_n$ or $y \not \in A_n$ and $x \in A_n$. There exists thus some $m \in  \mathbb{N}$ such that $\chi_{A_m}(x) \neq \chi_{A_m}(y)$ implying $c(x) \neq c(y)$ by Lemma \ref{series}. By Lemma \ref{basic characterizations} we have $c$ is an injective monotone.

\subsubsection{Proof of Proposition \ref{set charac monotones}}
\label{proof prop 7}

$(i)$ We only need to show given a strict monotone $v$ there exists a countable set $(A_n)_{n \in \mathbb{N}}$ that separates any pair $x,y \in X$ such that $x \prec y$, the converse is true by Lemma \ref{lemma:sep} Consider $A_n \coloneqq v^{-1}([q_n,\infty))$ where $(q_n)_{n \in \mathbb{N}}$ is a numeration of the rational numbers. If $x \prec y$ there exists $q_n \in \mathbb{Q}$ such that $v(x) < q_n < v(y)$ which means $y \in A_n$ and $x \not \in A_n$.

$(ii)$ Relying on Lemma \ref{lemma:sep} and $(i)$ we can take $(A_n)_{n \in \mathbb{N}}$ defined as in $(i)$ assuming $v$ is an injective monotone. Given $x,y \in X$ such that $x \bowtie y$ we have $v(x) \neq v(y)$ which implies there exists some $q_n \in \mathbb{Q}$ between $v(x)$ and $v(y)$ i.e. either $x \in A_n$ and $y \not \in A_n$ or $y \in A_n$ and $x \not \in A_n$.

\subsection{Semicontinuity}
\label{topology}

Much of the economic literature on utility representations in preordered spaces is concerned with topological questions, in particular, under which conditions on the preordered space one can expect that monotones and utilities satisfy certain continuity properties (e.g. \cite{debreu1964continuity,mehta1986existence,alcantud2016richter}). This is particularly important for optimization, since continuous functions attain their maximal elements  on compact sets. Therefore, in this section we collect the continuity properties of the injective monotones that appear in the main part of this article.    

Given a topology $\tau$, a triple $(X, \preceq, \tau)$ is called a \emph{preordered topological space}. A function $f: (X, \tau) \rightarrow (\mathbb{R}, \tau_{nat})$, where $\tau_{nat}$ is the tolopogy given by the Euclidean metric, is said to be \emph{upper semicontinuous} if $f^{-1}((-\infty, r)) \in \tau$ $\forall r \in \mathbb{R}$. 

Upper semicontinuous functions retain the property of continuous functions that they assume their maxima on compact sets, that is, they are effective on any compact set $B \subseteq X$.

Similarly, we say $(X, \preceq, \tau)$ is \emph{upper semicontinuous} if $i(x)=\{z \in X | x \preceq z\}$ is closed $\forall x \in X$. We may abuse notation and say that $\preceq$ is upper semicontinuous whenever $X$ and $\tau$ are clear.

\newpage
\begin{proposition}
\label{collection of continuity results}
Let $(X,\preceq, \tau)$ be a preordered topological space.
\begin{enumerate}[label=(\roman*)]
\item In Proposition \ref{strict mono equi r-p repre}, we can choose an upper semicontinuous multi-utility if the monotone is upper semicontinuous. 
\item In Proposition \ref{countable implies injective monotone}, we can add upper semicontinuous to both the hypothesis and the thesis.
\item In Proposition \ref{countable implies injective monotone multi}, if the preorder is upper semicontinuous then the equivalence remains true if upper semicontinuous is added to all clauses. 
\item In Proposition \ref{set charac monotones}, the monotones can be chosen to be upper semicontinuous if and only if the separating families consist of closed increasing sets.
\item In Proposition \ref{exists countable weakly upper dense implies equivalence R-P and injective monotone}, if the preorder is upper semicontinuous then the equivalence remains true if upper semicontinuous is added to all clauses. 
\end{enumerate}
\end{proposition}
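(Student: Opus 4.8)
The unifying observation is that upper semicontinuity of a real-valued function is controlled entirely by the closedness of superlevel sets: $f$ is upper semicontinuous exactly when $f^{-1}([r,\infty))$ is closed for every $r\in\mathbb R$, so in particular $\chi_A$ is upper semicontinuous if and only if $A$ is closed. Two stability facts then do all the work: a finite sum of upper semicontinuous functions is upper semicontinuous, and a uniform limit of upper semicontinuous functions is upper semicontinuous. Applying these to the construction \eqref{eq:constructInjectiveMonotone} of Lemma \ref{series}, whenever every $A_n$ is closed the partial sums $\sum_{n=0}^{N} r^n\chi_{A_n}$ are upper semicontinuous and converge uniformly to $c$ (the tail being bounded by $r^{N+1}/(1-r)$), so $c$ is upper semicontinuous. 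This single fact---that the series $c$ built from closed increasing sets is upper semicontinuous---is the workhorse behind every clause.

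From here each part reduces to checking that the relevant sets are closed. For (iv) this is immediate and yields the cleanest statement: if $v$ is an upper semicontinuous (strict or injective) monotone, then the separating sets $v^{-1}([q,\infty))$ used in the proof of Proposition \ref{set charac monotones} are closed and increasing; conversely, a separating family of closed increasing sets feeds the workhorse to produce an upper semicontinuous monotone. For (ii), the sets $A_{m,q}=u_m^{-1}([q,\infty))$ are closed precisely because each $u_m$ is upper semicontinuous, so adding upper semicontinuity to the hypothesis of Proposition \ref{countable implies injective monotone} propagates it to the conclusion through $c$. For (iii), the injective monotones $c_{m,p}$ of Proposition \ref{countable implies injective monotone multi} are built from the very same sets $A_{m,q}$ merely reordered by the transposition $\varphi_{m,p}$; reordering a countable family of closed sets changes neither their closedness nor the uniform convergence of the partial sums, so each $c_{m,p}$ is again upper semicontinuous, the upper semicontinuity of $\preceq$ being exactly what places us in the regime where the starting countable multi-utility can be taken upper semicontinuous.

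For (i), the members of the multi-utility in Proposition \ref{strict mono equi r-p repre} have the form $c_x=c+\chi_{i(x)}$; here $c$ is upper semicontinuous by hypothesis, and $\chi_{i(x)}$ is upper semicontinuous as soon as $i(x)$ is closed, i.e.\ under the natural upper-semicontinuity requirement on $\preceq$, so each $c_x$ is a finite sum of upper semicontinuous functions and hence upper semicontinuous. The payoff, which motivates the whole proposition, is the remark recorded just before it: upper semicontinuous functions attain their maxima on compact sets, so the injective monotones produced here are effective on every compact subset and thus genuinely usable in optimization.

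The only genuinely non-routine step, and the one I would isolate as a lemma, is the behaviour of the infinite series: an increasing pointwise limit of upper semicontinuous functions is \emph{not} upper semicontinuous in general, so one cannot argue termwise. The argument must go through \emph{uniform} convergence of the partial sums, which is available precisely because $r\in(0,\tfrac12)$ forces a geometric tail bound. This is the same inequality $r^m>\sum_{n>m}r^n$ that secures injectivity in Lemma \ref{series}, now doing double duty to secure upper semicontinuity; verifying the uniform-limit preservation of upper semicontinuity carefully is the crux, while every remaining clause is a mechanical closedness check.
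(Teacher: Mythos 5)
Your treatment of clauses (i)--(iv) is correct and follows essentially the same route as the paper: upper semicontinuity is read off from closedness of superlevel sets, characteristic functions of closed sets are upper semicontinuous, and the series \eqref{eq:constructInjectiveMonotone} inherits upper semicontinuity because its partial sums are finite sums of upper semicontinuous functions converging \emph{uniformly} (the paper invokes exactly this, citing closure of upper semicontinuous functions under addition, positive scalar multiples, and uniform limits). Your emphasis that one cannot argue termwise through the increasing pointwise limit, and must use the uniform tail bound, is the right crux. One small inaccuracy in your commentary: uniform convergence of the partial sums needs only $r\in(0,1)$, since the tail is bounded by $r^{N+1}/(1-r)$ regardless; the restriction $r\in\bigl(0,\tfrac{1}{2}\bigr)$ is needed solely for injectivity in Lemma \ref{series}, so it is not ``doing double duty.''

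The genuine gap is that clause (v) is never addressed. It does not follow from your workhorse lemma alone: the equivalence in Proposition \ref{exists countable weakly upper dense implies equivalence R-P and injective monotone} is proved by showing that a strict monotone $u$ together with a countable Debreu upper dense set $D$ yields the countable multi-utility $\{u\}\cup\{\chi_{i(d)}\}_{d\in D}$, and to add upper semicontinuity to all clauses one must observe that each $\chi_{i(d)}$ is upper semicontinuous because $i(d)$ is closed --- which is exactly where the hypothesis that the \emph{preorder} is upper semicontinuous enters, rather than any property of the series construction. With that observation, an upper semicontinuous strict monotone gives an upper semicontinuous countable multi-utility, and then your clause (ii) closes the cycle to an upper semicontinuous injective monotone; the remaining implication is trivial. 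This missing step is short, but it is a separate argument from the closedness-of-superlevel-sets checks that handle (i)--(iv), so the proposal as written is incomplete.
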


Notice, the uncertainty preorder $\preceq_U$ is upper semicontinuous with respect to the Euclidean topology, since 
\begin{equation*}
i(p) = \{q\in \mathbb P_\Omega | p\preceq_u q\} = \bigcap_{i=1}^{|\Omega|-1} \{q \in \mathbb{P}_{\Omega} | u_i(p) \leq u_i(q)\}=\bigcap_{i=1}^{|\Omega|-1} u_i^{-1}( [u_i(p), \infty))
\end{equation*}
where $u_i^{-1}( [u_i(p), \infty))$ is closed, because all $u_i$ are upper semicontinuous.

\begin{proof}
$(i)$ If there exists an unpper semicontinuous injective monotone then we can construct w.l.o.g. an upper semicontinuous injective monotone $c:X \rightarrow (0,1)$. Since $\preceq$ is upper semicontinuous we know $\chi_{i(x)}$ is upper semicontinuous and, given the fact the class of upper semicontinuous functions is closed under addition by Proposition 1.5.12 in \cite{pedersen2012analysis},  $c_x$ in the proof of Proposition \ref{strict mono equi r-p repre} is upper semicontinuous $\forall x \in A_c$. Thus, $\{c\}\cup \{c_x\}_{x\in A_c}$ is an upper semicontinuous injective monotone multi-utility of $(X, \preceq, \tau)$.

$(ii)$ Take $(u_m)_{m \in M}$, $(A_n)_{n \in \mathbb{N}}$ and $c$ defined as in the proof of Proposition \ref{countable implies injective monotone}. If $(u_m)_{m \in M}$ is upper semicontinuous then $\forall n \in \mathbb{N}$ $A_n = u_{m_n}^{-1}([q_n, \infty[) \in \tau^c$ and $\chi_{A_n}(x)$ is upper semicontinuous $\forall n \in \mathbb{N}$. The class of upper semicontinuous function is closed under addition, product by positive scalars and uniform convergence by Proposition $1.5.12$ in \cite{pedersen2012analysis}. By the first two $c_N \coloneqq \sum_{n=0}^{N} 3^{-n} \chi_{A_n}$ is upper semicontinuous $\forall N \in \mathbb{N}$ and by the third $c= \lim_{N \to \infty} c_N$ is upper semicontinuous.

$(iii)$ Following $(ii)$ we get $\{c\}\cup\{c_{n_1,n_2}\}_{n_1<n_2}$ defined as in the proof of Propostion \ref{countable implies injective monotone multi} consists of upper semicontinuous injective monotones.

$(iv)$ Notice whenever $(A_n)_{n \in \mathbb{N}}$ in the proof of Propostion \ref{set charac monotones} is defined through an upper semicontinuous function, either a strict monotone or an injective montone, then $A_n$ is closed $\forall n \in \mathbb{N}$. Conversely, we can follow the proof of $(ii)$ to get upper semicontinuity for both a strict monotone and an injective monotone constructed as in Lemma \ref{lemma:sep}.

$(v)$ We again only show $(i)$ implies $(iii)$ in Proposition \ref{exists countable weakly upper dense implies equivalence R-P and injective monotone}. If $\preceq$ is upper semicontinuous then $\chi_{i(d)}$ is upper semicontinuous $\forall d \in D$ and since we can choose $u$ to be an upper semicontinuous strict monotone by hypothesis we get $\{u\} \bigcup \{\chi_{i(d)}\}$ is an upper semicontinuous countable multi-utility.
\end{proof}

\end{appendix}

\newpage

\bibliographystyle{plain}
\bibliography{references}

\begin{thebibliography}{10}

\bibitem{Aczel1974}
J.~Aczél, B.~Forte, and C.~T. Ng.
\newblock Why the shannon and hartley entropies are 'natural'.
\newblock {\em Advances in Applied Probability}, 6(1):131--146, 1974.

\bibitem{alcantud2013representations}
Jos{\'e} Carlos~R Alcantud, Gianni Bosi, and Magal{\`\i} Zuanon.
\newblock Representations of preorders by strong multi-objective functions.
\newblock Technical Report MPRA Paper 5232, University Library of Munich, 2013.

\bibitem{alcantud2016richter}
Jos{\'e} Carlos~R Alcantud, Gianni Bosi, and Magal{\`\i} Zuanon.
\newblock Richter--peleg multi-utility representations of preorders.
\newblock {\em Theory and Decision}, 80(3):443--450, 2016.

\bibitem{arnold2018majorization}
Barry~C Arnold.
\newblock {\em Majorization and the Lorenz order with applications in applied
  mathematics and economics}.
\newblock Springer, 2018.

\bibitem{aumann1962utility}
Robert~J Aumann.
\newblock Utility theory without the completeness axiom.
\newblock {\em Econometrica: Journal of the Econometric Society}, pages
  445--462, 1962.

\bibitem{banerjee2010multi}
Kuntal Banerjee and Ram~Sewak Dubey.
\newblock On multi-utility representation of equitable intergenerational
  preferences.
\newblock In {\em Econophysics and Economics of Games, Social Choices and
  Quantitative Techniques}, pages 175--180. Springer, 2010.

\bibitem{bernoulli1713ars}
J~Bernoulli.
\newblock Ars conjectandi.
\newblock {\em Thurneysen Brothers, Basel, Switzerland}, 1713.

\bibitem{bevilacqua2018maximal}
Paolo Bevilacqua, Gianni Bosi, and Magal{\`\i} Zuanon.
\newblock Maximal elements of preorders from maximization of transfer upper
  continuous weak utilities on a compact space.
\newblock {\em Far East J. Math. Sci}, 103:213--221, 2018.

\bibitem{bevilacqua2018multiobjective}
Paolo Bevilacqua, Gianni Bosi, and Magal{\`\i} Zuanon.
\newblock Multiobjective optimization, scalarization, and maximal elements of
  preorders.
\newblock In {\em Abstract and Applied Analysis}, volume 2018. Hindawi, 2018.

\bibitem{bevilacqua2018representation}
Paolo Bevilacqua, Gianni Bosi, and Magal{\`\i} Zuanon.
\newblock Representation of a preorder on a topological space by a countable
  family of upper semicontinuous order-preserving functions.
\newblock {\em Advances and Applications in Mathematical Sciences},
  17(5):417--427, 2018.

\bibitem{bombelli1987space}
Luca Bombelli, Joohan Lee, David Meyer, and Rafael~D Sorkin.
\newblock Space-time as a causal set.
\newblock {\em Physical review letters}, 59(5):521, 1987.

\bibitem{bosi2018upper}
G~Bosi, P~Bevilacqua, and M~Zuanon.
\newblock Upper semicontinuous representability of maximal elements for non
  total preorders on compact spaces.
\newblock {\em Res J Econ 2: 1}, 3:2, 2018.

\bibitem{bosi2020mathematical}
Gianni Bosi, Mar{\'\i}a~J Campi{\'o}n, Juan~C Candeal, and Esteban Indurain.
\newblock {\em Mathematical Topics on Representations of Ordered Structures and
  Utility Theory}.
\newblock Springer, 2020.

\bibitem{bosi2020topologies}
Gianni Bosi, Asier Estevan, and Armajac Ravent{\'o}s-Pujol.
\newblock Topologies for semicontinuous richter--peleg multi-utilities.
\newblock {\em Theory and Decision}, 88(3):457--470, 2020.

\bibitem{bosi2013existence}
Gianni Bosi and M~Zuanon.
\newblock Existence of maximal elements of semicontinuous preorders.
\newblock {\em Int. J. Math. Anal}, 7:1005--1010, 2013.

\bibitem{bosi2017maximal}
Gianni Bosi and Magal{\`\i}~E Zuanon.
\newblock Maximal elements of quasi upper semicontinuous preorders on compact
  spaces.
\newblock {\em Economic Theory Bulletin}, 5(1):109--117, 2017.

\bibitem{brandao2015second}
Fernando Brandao, Micha{\l} Horodecki, Nelly Ng, Jonathan Oppenheim, and
  Stephanie Wehner.
\newblock The second laws of quantum thermodynamics.
\newblock {\em Proceedings of the National Academy of Sciences},
  112(11):3275--3279, 2015.

\bibitem{bridges2013representations}
Douglas~S Bridges and Ghanshyam~B Mehta.
\newblock {\em Representations of preferences orderings}, volume 422.
\newblock Springer Science \& Business Media, 2013.

\bibitem{campion2018survey}
Mar{\'\i}a-Jes{\'u}s Campi{\'o}n, Cristina G{\'o}mez-Polo, Esteban
  Indur{\'a}in, and Armajac Ravent{\'o}s-Pujol.
\newblock A survey on the mathematical foundations of axiomatic entropy:
  Representability and orderings.
\newblock {\em Axioms}, 7(2):29, 2018.

\bibitem{candeal2001utility}
Juan~C Candeal, Juan~R De~Miguel, Esteban Indur{\'a}in, and Ghanshyam~B Mehta.
\newblock Utility and entropy.
\newblock {\em Economic Theory}, 17(1):233--238, 2001.

\bibitem{csiszar2008axiomatic}
Imre Csisz{\'a}r.
\newblock Axiomatic characterizations of information measures.
\newblock {\em Entropy}, 10(3):261--273, 2008.

\bibitem{dalton1920measurement}
Hugh Dalton.
\newblock The measurement of the inequality of incomes.
\newblock {\em The Economic Journal}, 30(119):348--361, 1920.

\bibitem{debreu1954representation}
Gerard Debreu.
\newblock Representation of a preference ordering by a numerical function.
\newblock {\em Decision processes}, 3:159--165, 1954.

\bibitem{debreu1964continuity}
Gerard Debreu.
\newblock Continuity properties of paretian utility.
\newblock {\em International Economic Review}, 5(3):285--293, 1964.

\bibitem{ehrgott2005multicriteria}
Matthias Ehrgott.
\newblock {\em Multicriteria optimization}, volume 491.
\newblock Springer Science \& Business Media, 2005.

\bibitem{eilenberg1941ordered}
Samuel Eilenberg.
\newblock Ordered topological spaces.
\newblock {\em American Journal of Mathematics}, 63(1):39--45, 1941.

\bibitem{evren2011multi}
{\"O}zg{\"u}r Evren and Efe~A Ok.
\newblock On the multi-utility representation of preference relations.
\newblock {\em Journal of Mathematical Economics}, 47(4-5):554--563, 2011.

\bibitem{fishburn1970utility}
Peter~C Fishburn.
\newblock Utility theory for decision making.
\newblock Technical report, Research analysis corp McLean VA, 1970.

\bibitem{fleischer1961numerical}
Isidore Fleischer.
\newblock Numerical representation of utility.
\newblock {\em Journal of the Society for Industrial and Applied Mathematics},
  9(1):48--50, 1961.

\bibitem{fox2016taming}
Roy Fox, Ari Pakman, and Naftali Tishby.
\newblock Taming the noise in reinforcement learning via soft updates.
\newblock In {\em Proceedings of the Thirty-Second Conference on Uncertainty in
  Artificial Intelligence}, pages 202--211, 2016.

\bibitem{gottwald2019bounded}
Sebastian Gottwald and Daniel~A Braun.
\newblock Bounded rational decision-making from elementary computations that
  reduce uncertainty.
\newblock {\em Entropy}, 21(4):375, 2019.

\bibitem{gottwald2020two}
Sebastian Gottwald and Daniel~A Braun.
\newblock The two kinds of free energy and the bayesian revolution.
\newblock {\em PLoS computational biology}, 16(12):e1008420, 2020.

\bibitem{hardy1952inequalities}
Godfrey~Harold Hardy, John~Edensor Littlewood, and George P{\'o}lya.
\newblock Inequalities cambridge university press.
\newblock {\em Cambridge, England}, page~89, 1952.

\bibitem{hennig2007some}
Christian Hennig and Mahmut Kutlukaya.
\newblock Some thoughts about the design of loss functions.
\newblock {\em REVSTAT--Statistical Journal}, 5(1):19--39, 2007.

\bibitem{herden1989existence}
Gerhard Herden.
\newblock On the existence of utility functions.
\newblock {\em Mathematical Social Sciences}, 17(3):297--313, 1989.

\bibitem{herden2012utility}
Gerhard Herden and Vladimir~L Levin.
\newblock Utility representation theorems for debreu separable preorders.
\newblock {\em Journal of Mathematical Economics}, 48(3):148--154, 2012.

\bibitem{jahn2009vector}
Johannes Jahn.
\newblock {\em Vector optimization}.
\newblock Springer, 2009.

\bibitem{jaynes2003probability}
Edwin~T Jaynes.
\newblock {\em Probability theory: The logic of science}.
\newblock Cambridge university press, 2003.

\bibitem{kaminski2007quasi}
Bogumil Kaminski.
\newblock On quasi-orderings and multi-objective functions.
\newblock {\em European journal of operational research}, 177(3):1591--1598,
  2007.

\bibitem{li2013neumann}
Yuan Li and Paul Busch.
\newblock Von neumann entropy and majorization.
\newblock {\em Journal of Mathematical Analysis and Applications},
  408(1):384--393, 2013.

\bibitem{lieb1999physics}
Elliott~H Lieb and Jakob Yngvason.
\newblock The physics and mathematics of the second law of thermodynamics.
\newblock {\em Physics Reports}, 310(1):1--96, 1999.

\bibitem{lorenz1905methods}
Max~O Lorenz.
\newblock Methods of measuring the concentration of wealth.
\newblock {\em Publications of the American statistical association},
  9(70):209--219, 1905.

\bibitem{maccheroni2006ambiguity}
Fabio Maccheroni, Massimo Marinacci, and Aldo Rustichini.
\newblock Ambiguity aversion, robustness, and the variational representation of
  preferences.
\newblock {\em Econometrica}, 74(6):1447--1498, 2006.

\bibitem{marshall1979inequalities}
Albert~W Marshall, Ingram Olkin, and Barry~C Arnold.
\newblock {\em Inequalities: theory of majorization and its applications},
  volume 143.
\newblock Springer, 1979.

\bibitem{mehta1977topological}
Ghanshyam Mehta.
\newblock Topological ordered spaces and utility functions.
\newblock {\em International Economic Review}, pages 779--782, 1977.

\bibitem{mehta1981recent}
Ghanshyam Mehta.
\newblock {\em Recent developments in utility theory}.
\newblock University of Queensland, Department of Economics, 1981.

\bibitem{mehta1986existence}
Ghanshyam Mehta.
\newblock Existence of an order-preserving function on normally preordered
  spaces.
\newblock {\em Bulletin of the Australian Mathematical Society},
  34(1):141--147, 1986.

\bibitem{mehta1986theorem}
Ghanshyam Mehta.
\newblock On a theorem of fleischer.
\newblock {\em Journal of the Australian Mathematical Society}, 40(2):261--266,
  1986.

\bibitem{mehta1988some}
Ghanshyam Mehta.
\newblock Some general theorems on the existence of order-preserving functions.
\newblock {\em Mathematical Social Sciences}, 15(2):135--143, 1988.

\bibitem{minguzzi2010time}
Ettore Minguzzi.
\newblock Time functions as utilities.
\newblock {\em Communications in Mathematical Physics}, 298(3):855--868, 2010.

\bibitem{minguzzi2013normally}
Ettore Minguzzi.
\newblock Normally preordered spaces and utilities.
\newblock {\em Order}, 30(1):137--150, 2013.

\bibitem{muller2016generalization}
Markus~P M{\"u}ller and Michele Pastena.
\newblock A generalization of majorization that characterizes shannon entropy.
\newblock {\em IEEE Transactions on Information Theory}, 62(4):1711--1720,
  2016.

\bibitem{nachbin1965topology}
Leopoldo Nachbin.
\newblock {\em Topology and order}.
\newblock Number~4. van Nostrand, 1965.

\bibitem{nielsen1999conditions}
Michael~A Nielsen.
\newblock Conditions for a class of entanglement transformations.
\newblock {\em Physical Review Letters}, 83(2):436, 1999.

\bibitem{ok2002utility}
Efe~A Ok et~al.
\newblock Utility representation of an incomplete preference relation.
\newblock {\em Journal of Economic Theory}, 104(2):429--449, 2002.

\bibitem{ortega2013thermodynamics}
Pedro~A Ortega and Daniel~A Braun.
\newblock Thermodynamics as a theory of decision-making with
  information-processing costs.
\newblock {\em Proceedings of the Royal Society A: Mathematical, Physical and
  Engineering Sciences}, 469(2153):20120683, 2013.

\bibitem{pedersen2012analysis}
Gert~K Pedersen.
\newblock {\em Analysis now}, volume 118.
\newblock Springer Science \& Business Media, 2012.

\bibitem{peleg1970utility}
Bezalel Peleg.
\newblock Utility functions for partially ordered topological spaces.
\newblock {\em Econometrica: Journal of the Econometric Society}, pages 93--96,
  1970.

\bibitem{rebille2019continuous}
Yann R{\'e}bill{\'e}.
\newblock Continuous utility on connected separable topological spaces.
\newblock {\em Economic Theory Bulletin}, 7(1):147--153, 2019.

\bibitem{renyi1961measures}
Alfr{\'e}d R{\'e}nyi et~al.
\newblock On measures of entropy and information.
\newblock In {\em Proceedings of the Fourth Berkeley Symposium on Mathematical
  Statistics and Probability, Volume 1: Contributions to the Theory of
  Statistics}. The Regents of the University of California, 1961.

\bibitem{richter1966revealed}
Marcel~K Richter.
\newblock Revealed preference theory.
\newblock {\em Econometrica: Journal of the Econometric Society}, pages
  635--645, 1966.

\bibitem{schur1923uber}
Issai Schur.
\newblock Uber eine klasse von mittelbildungen mit anwendungen auf die
  determinantentheorie.
\newblock {\em Sitzungsberichte der Berliner Mathematischen Gesellschaft},
  22(9-20):51, 1923.

\bibitem{Shore1980}
J.~{Shore} and R.~{Johnson}.
\newblock Axiomatic derivation of the principle of maximum entropy and the
  principle of minimum cross-entropy.
\newblock {\em IEEE Transactions on Information Theory}, 26(1):26--37, 1980.

\bibitem{still2009information}
Susanne Still.
\newblock Information-theoretic approach to interactive learning.
\newblock {\em EPL (Europhysics Letters)}, 85(2):28005, 2009.

\bibitem{szpilrajn1930extension}
Edward Szpilrajn.
\newblock Sur l'extension de l'ordre partiel.
\newblock {\em Fundamenta mathematicae}, 1(16):386--389, 1930.

\bibitem{tishby2011information}
Naftali Tishby and Daniel Polani.
\newblock Information theory of decisions and actions.
\newblock In {\em Perception-action cycle}, pages 601--636. Springer, 2011.

\bibitem{tsallis1988possible}
Constantino Tsallis.
\newblock Possible generalization of boltzmann-gibbs statistics.
\newblock {\em Journal of statistical physics}, 52(1):479--487, 1988.

\bibitem{turgut2007catalytic}
Sadi Turgut.
\newblock Catalytic transformations for bipartite pure states.
\newblock {\em Journal of Physics A: Mathematical and Theoretical},
  40(40):12185, 2007.

\bibitem{white1980notes}
DJ~White.
\newblock Notes in decision theory: Optimality and efficiency ii.
\newblock {\em European Journal of Operational Research}, 1980.

\bibitem{williams1991function}
Ronald~J Williams and Jing Peng.
\newblock Function optimization using connectionist reinforcement learning
  algorithms.
\newblock {\em Connection Science}, 3(3):241--268, 1991.

\end{thebibliography}

\end{document}